\documentclass[envcountsame]{llncs}

\usepackage{enumerate}
\usepackage{wrapfig}
\usepackage{amsopn}
\usepackage{amsfonts}
\usepackage{amssymb}

\usepackage{complexity}
\usepackage{mathtools}
\usepackage{tabularx}
\usepackage{environ}
\usepackage{hyperref}
\usepackage{xcolor,colortbl,tikz}
\usepackage{esvect}

%Model
\newcommand{\Domain}{D}
\newcommand{\bcnet}{\mathcal{N}}
\newcommand{\bcprot}{P}
\newcommand{\send}[1]{! #1}
\newcommand{\rec}[1]{? #1}
\newcommand{\Ops}[1]{\mathit{Ops}(#1)}
\newcommand{\IConf}{\mathit{CF}_0}
\newcommand{\Conf}{\mathit{CF}}

\newcommand{\Trans}{\mathit{pr}}

\newcommand{\Post}{\mathit{post}}
\newcommand{\Pre}{\mathit{pre}}
\newcommand{\Enabled}{\mathit{enabled}}
\newcommand{\Postofapp}[2]{\Post_{#1}(#2)}

\newcommand{\Enabledoffapp}[2]{\Enabled_{#1}(#2)}
\newcommand{\Kill}{\mathit{Kill}}
\newcommand{\Gen}{\mathit{Gen}}
\newcommand{\type}{\mathit{idx}}
\DeclareMathOperator{\infop}{Inf}
\newcommand{\Inf}[1]{\infop(#1)}
\DeclareMathOperator{\finop}{Fin}
\newcommand{\Fin}[1]{\finop(#1)}
\DeclareMathOperator{\setop}{Set}
\newcommand{\Set}[1]{\setop(#1)}
\DeclareMathOperator{\Pos}{Pos}
\DeclareMathOperator{\Target}{Target}

\DeclareMathOperator{\inc}{inc}
\DeclareMathOperator{\dec}{dec}
\newcommand{\Del}{\mathit{Del}}
\newcommand{\op}{\mathit{op}}

%LTL
\newcommand{\prop}{\mathcal{A}}
\newcommand{\Next}{\mathcal{X}}
\newcommand{\Until}{\mathcal{U}}
\newcommand{\bcmap}{\lambda}
\newcommand{\fair}{\mathit{fair}}
\newcommand{\sparse}{\mathit{sparse}}
\DeclareMathOperator{\proj}{proj}
\newcommand{\idx}{\mathit{idx}}
\newcommand{\J}{\mathcal{J}}

%Math-symbols and Objects
\newcommand{\bigO}{\mathcal{O}}
\newcommand{\setcon}[1]{\lbrace #1 \rbrace}
\newcommand{\Setcon}[2]{\lbrace #1 \mid #2 \rbrace}
\newcommand{\Naturals}{\mathbb{N}}
\newcommand{\PSet}[1]{\mathcal{P}(#1)}
\newcommand{\existinf}{\exists^{\infty}}

%Functions
\newcommand{\abs}[1]{|#1|}

%Problem Environment
\makeatletter
\newcommand {\problemtitle}[1]{\gdef\@problemtitle{#1}}
\newcommand {\problemshort}[1]{\gdef\@problemshort{#1}}
\newcommand {\probleminput}[1]{\gdef\@probleminput{#1}}
\newcommand {\problemparameter}[1]{\gdef\@problemparameter{#1}}
\newcommand {\problemquestion}[1]{\gdef\@problemquestion{#1}}

\NewEnviron{myproblem}{
	\problemtitle{}
	\problemshort{}
	\probleminput{}
	\problemquestion{}
	
	\BODY% Parse input
	\par\addvspace{.5\baselineskip}
	\noindent
	\framebox[\textwidth]{%15.00002pt is the former size of parindent
		\begin{tabularx}{\textwidth}{@{\hspace{\parindent}} l X c}
			\multicolumn{2}{@{\hspace{\parindent}}l}{ \normalsize \emph{\@problemtitle} \@problemshort} \\% Title and shortcut
			\normalsize \textbf{Input:} & \normalsize \@probleminput \\% Input
			\normalsize \textbf{Question:} & \normalsize \@problemquestion% Question
		\end{tabularx}
	}
	\par\addvspace{.5\baselineskip}
}

\sloppy
\begin{document}
	
	\title{Liveness in Broadcast Networks}
	
	\author{Peter Chini\and Roland Meyer \and Prakash Saivasan}
	
	\institute{TU Braunschweig \\ \email{ \{p.chini, roland.meyer, p.saivasan\}@tu-bs.de}}
	
	\authorrunning{P. Chini, R. Meyer, and P. Saivasan}
	
	\maketitle
		
	\begin{abstract}
	We study liveness and model checking problems for broadcast networks,
	a system model of identical clients communicating via message passing.
	The first problem that we consider is \emph{Liveness Verification}.
	It asks whether there is a computation such that one of the clients visits a final state infinitely often. 
	The complexity of the problem has been open since 2010 when it was shown to be $\P$-hard and solvable in $\EXPSPACE$.
	We close the gap by a polynomial-time algorithm.
	The algorithm relies on a characterization of live computations in terms of paths in a suitable graph, combined with a fixed-point iteration to efficiently check the existence of such paths. 
	The second problem is \emph{Fair Liveness Verification}.
	It asks for a computation where all participating clients visit a final state infinitely often.
	We adjust the algorithm to also solve fair liveness in polynomial time.
	
	Both problems can be instrumented to answer model checking questions for broadcast networks against linear time temporal logic specifications.
	The first problem in this context is \emph{Fair Model Checking}.
	It demands that for all computations of a broadcast network, all participating clients satisfy the specification.
	We solve the problem via the Vardi-Wolper construction and a reduction to \emph{Liveness Verification}.
	The second problem is \emph{Sparse Model Checking}.
	It asks whether each computation has a participating client that satisfies the specification.
	We reduce the problem to \emph{Fair Liveness Verification}.
\end{abstract}

	\section{Introduction}
\label{Section:Introduction}

Parameterized systems consist of an arbitrary number of identical clients that communicate via some mechanism like shared memory or message passing~\cite{AK86}.
Parameterized systems appear in various applications. 
In distributed algorithms, a group of clients has to form a consensus \cite{Konnov2017}.
In cache-coherence protocols, coherence has to be guaranteed for data shared among threads~\cite{Delzanno2000}.
Developing parameterized systems is difficult. 
The desired functionality has to be achieved not only for a single system instance but for an arbitrary number of clients that is not known a priori. 
The proposed solutions are generally tricky and sometimes buggy \cite{Lamport1999}, which has lead to substantial interest in parameterized verification~\cite{Bloem2015}, verification algorithms for parameterized systems.

Broadcast networks are a particularly successful model for parameterized verification~\cite{Singh2009,Hague2011,Esparza1999,Esparza2013,Durand-Gasselin2015,Bouyer2016,Delzanno2010,Delzanno2012,Chini2018,Bertrand2015,Bertrand2018}. 
A broadcast network consists of an arbitrary number of identical finite-state automata communicating via passing messages. 
We call these automata clients, because they reflect the interaction of a single client in the parameterized system with its environment. 
When a client sends a message (by taking a send transition), at the same time a number of clients receive the message (by taking a corresponding receive transition). 
A client ready to receive a message may decide to ignore it, and it may be the case that nobody receives the message.

What makes broadcast networks interesting is the surprisingly low complexity of their verification problems.  
Earlier works have concentrated on safety verification. 
In the coverability problem, the question is whether at least one participating client can reach an unsafe state. 
The problem has been shown to be solvable in polynomial time~\cite{Delzanno2012}.
In the synchronization problem, all clients need to visit a final state at the same time.
Although seemingly harder than coverability, it turned out to be solvable in polynomial time as well\cite{Fournier2015}.
Both problems remain in $\P$ if the communication topology is slightly restricted \cite{Bertrand2018}, a strengthening that usually leads to undecidability results~\cite{Delzanno2010,Bertrand2018}.

The focus of our work is on liveness verification and model checking.
Liveness properties formulate good events that should happen during a computation. 
To give an example, one would state that every request has to be followed by a response.
In the setting of broadcast networks, liveness verification was studied in \cite{Delzanno2010}.
The problem generalizes coverability in that at least one client needs to visit a final state infinitely many times. 
The problem was shown to be solvable in $\EXPSPACE$ by a reduction to repeated coverability in Petri Nets~\cite{Esparza1994,Esparza2003}.
The only known lower bound, however, is $\P$-hardness \cite{Delzanno2012}.

Our main contribution is an algorithm that solves the liveness verification problem in polynomial time. 
It closes the aforementioned gap. 
We also address a fair variant of liveness verification where all clients participating infinitely often in a computation have to see a final state infinitely often, a requirement known as compassion~\cite{PS08}. 
We give an instrumentation that compiles away compassion and reduces the problem to finding cycles.
By our results, safety and liveness verification have the same complexity, a phenomenon that has been observed in other models as well~\cite{Hague2011,Esparza2014,Durand-Gasselin2015,HMM18}.

At the heart of our liveness verification algorithm is a fixed-point iteration terminating in polynomial time.
It relies on an efficient representation of computations.
We first characterize live computations in terms of paths in a suitable graph.
Since the graph is of exponential size, we cannot immediately apply a path finding algorithm.
Instead, we show that a path exists if and only if there is a path in some normal form, a result inspired by \cite{Fournier2015}. 
Paths in normal form can then be found efficiently by the fixed-point iteration.

Our results yield efficient algorithms for model checking broadcast networks against linear time temporal logic (LTL) specifications \cite{Pnueli1977}.
Formally, we consider two variants of model checking, like for liveness verification.
The first variant incorporates a notion of fairness.
Given a broadcast network and a specification, it demands that in each computation, all clients satisfy the specification.
The second variant asks for each computation having at least one client satisfying it.
We solve both problems by employing the Vardi-Wolper construction \cite{Vardi1986} and by applying the aforementioned fixed-point iteration for liveness verification.
The results show that the given broadcast network only contributes a polynomial factor to the running time needed for model checking.

The paper at hand is an extension of the conference version that appeared in \cite{Chini0S19}.
It features a new section, namely Section \ref{Section:ModelChecking}, which shows how model checking problems for broadcast networks can be solved by applying the fixed-point iteration for liveness verification.
More precise, the section introduces two new model checking problems and presents algorithms as well as the corresponding time-complexity analyses.

\subsubsection*{Related Work}
\label{Section:RelatedWork}

We already discussed the related work on safety and liveness verification of broadcast networks. 
Broadcast networks~\cite{Delzanno2010,Singh2009,Esparza1999} were introduced to verify ad hoc networks~\cite{KoenigKozioura2006,SaksenaWiblingJonsson2008}. 
Ad hoc networks are reconfigurable in that the number of clients as well as their communication topology may change during the computation. 
If the transition relation is compatible with the topology, safety verification has been shown to be decidable~\cite{JK08}. 
Related studies do not assume compatibility but restrict the topology~\cite{HMM14}. 
If the dependencies among clients are bounded~\cite{M08}, safety verification is decidable independent of the transition relation~\cite{WiesZuffreyHenzinger2010,Zufferey2013}. 
Verification tools turn these decision procedures into practice~\cite{MS10,DOKO13}. 
D'Osualdo and Ong suggested a typing discipline for the communication topology~\cite{DOO16}. 
In~\cite{Bertrand2018}, decidability and undecidability results for reachability problems were proven for a locally changing topology. 
The case when communication is fixed along a given graph was studied in \cite{Abdulla2013}.
Topologies with bounded diameter were considered in \cite{Delzanno2011}. 
Perfect communication where a sent message is received by all clients was studied in \cite{Esparza1999}. 
Networks with communication failures were considered in \cite{Delzanno2012Failure}.
Probabilistic broadcast networks were studied in \cite{Bertrand2014}.
In \cite{Bertrand2015}, a variant of broadcast networks was considered where the clients follow a local strategy.

Broadcast networks are related to the leader-contributor model. 
It has a fixed leader and an arbitrary number of identical contributors that communicate via a shared memory. 
The model was introduced in \cite{Hague2011}.
The case when the leader and all contributors are finite-state automata was considered in \cite{Esparza2013} and the corresponding reachability problem was proven to be $\NP$-complete.
In \cite{Chini2018}, the authors took a parameterized complexity look at the reachability problem and proved it fixed-parameter tractable.
Liveness verification for this model was studied in \cite{Durand-Gasselin2015}.
The authors show that repeated reachability is $\NP$-complete.
From a viewpoint of parameterized complexity, the problem was considered in \cite{Chini2019}.
Networks with shared memory and randomized scheduler were studied in \cite{Bouyer2016}.
For a survey of parameterized verification we refer to \cite{Bloem2015}.

	\section{Broadcast Networks}
\label{Section:BroadcastNetworks}

We introduce the model of broadcast networks of interest in this paper. 
Our presentation avoids an explicit characterization of the communication topology in terms of graphs.
A \emph{broadcast network} is a concurrent system consisting of an arbitrary but finite number of identical clients that communicate by passing messages to each other. 
Formally, it is a pair $\bcnet = (\Domain,\bcprot)$.
The \emph{domain} $\Domain$ is a finite set of messages that can be used for communication. 
A message $a \in \Domain$ can either be sent, $\send{a}$, or received,~$\rec{a}$.
The set $\Ops{\Domain} = \Setcon{\send{a}, \rec{a}}{a \in \Domain}$ captures the communication operations a client can perform.
For modeling the identical clients, we abstract away the internal behavior and focus on the communication with others via~$\Ops{\Domain}$. 
With this, the clients are given in the form of a finite state automaton $\bcprot = (Q, I, \delta)$, where $Q$ is a finite set of states, $I \subseteq Q$ is a set of initial states, and $\delta \subseteq Q \times \Ops{\Domain} \times Q$ is the transition relation. 
We extend $\delta$ to words in $\Ops{\Domain}^*$ and write $q \xrightarrow{w} q'$ instead of $(q,w,q') \in \delta$. 

During a communication phase in $\bcnet$, one client sends a message that is received by a number of other clients.  
This induces a change of the current state in each client participating in the communication. 
We use \emph{configurations} to display the current states of the clients.   
A configuration is a tuple $c~=~(q_1, \dots, q_k) \in Q^k$, $k \in \Naturals$. 
We use $\Set{c}$ to denote the set of client states occurring in $c$.
To access the components of $c$, we use $c[i] = q_i$. 
As the number of clients in the system is arbitrary but fixed, we define the set of all configurations to be $\Conf = \bigcup_{k \in \Naturals} Q^k$. 
The set of \emph{initial configurations} is given by $\IConf =\bigcup_{k \in \Naturals} I^k$. 
The communication is modeled by a transition relation among configurations.
Let $c' = (q'_1, \dots, q'_k)$ be another configuration with $k$ clients and $a \in \Domain$ a message. 
We have a transition $c \xrightarrow{a}_\bcnet c'$ if the following conditions hold:
(1) there is a sender, an $i \in [1..k]$ such that $q_i \xrightarrow{\send{a}} q'_i$,
(2)~there is a number of receivers, a set $R \subseteq [1..k] \setminus \setcon{i}$ such that $q_j \xrightarrow{\rec{a}} q'_j$ for each $j \in R$, and 
(3) all other clients stay idle, for all $j \notin R \cup \setcon{i}$ we have $q_j = q'_j$. 
We use $\type(c \xrightarrow{a}_\bcnet c') = R \cup \setcon{i}$ to denote the indices of clients that contributed to the transition.
For $i \in \type(c \xrightarrow{a}_\bcnet c')$, we let $\Trans_i(c \xrightarrow{a}_\bcnet c') = c[i] \xrightarrow{\send{a} / \rec{a}} c'[i]$ be the contribution of client $i$.
If $i \notin \type(c \xrightarrow{a}_\bcnet c')$, we set $\Trans_i(c \xrightarrow{a}_\bcnet c') = \epsilon$.

We extend the transition relation to words  $w \in \Domain^*$ and write $c \xrightarrow{w}_\bcnet c'$.
Such a sequence of consecutive transitions is called a \emph{computation} of $\bcnet$. 
Note that all configurations appearing in a computation have the same number of clients.
We write $c \rightarrow^*_\bcnet c'$ if there is a word $w \in \Domain^*$ with $c \xrightarrow{w}_\bcnet c'$.
If $\abs{w} \geq 1$, we also use $c \rightarrow^+_\bcnet c'$.
Where appropriate, we skip $\bcnet$ in the index.
The definition of $\type$ can be extended to computations by combining all clients that contribute to one of the transitions.
We also extend the definition of $\Trans_i$ to computations by appending the individual contributions to transitions.

We are interested in infinite computations, sequences \mbox{$\pi=c_0\rightarrow c_1\rightarrow\ldots$} of infinitely many consecutive transitions.
Such a computation is called \emph{initialized}, if $c_0\in\IConf$. 
We use  $\Inf{\pi} = \Setcon{ i\in\Naturals }{\existinf j : i \in \type(c_j \rightarrow c_{j+1})}$ to denote the set of clients that participate in the computation infinitely often.
Let $F \subseteq Q$ be a set of final states.
Then we let $\Fin{\pi} = \Setcon{ i\in\Naturals }{\existinf j : c_j[i]\in F}$ represent the set of clients that visit final states infinitely often.

	\section{Liveness}
\label{Section:Liveness}

We consider the liveness verification problem for broadcast networks.
Given a broadcast network $\bcnet = (\Domain,\bcprot)$ with $\bcprot = (Q, I, \delta)$ and a set of final states $F \subseteq Q$, the problem asks whether there is an infinite initialized computation $\pi$ in which at least one client visits a state from $F$ infinitely often, $\Fin{\pi} \neq \emptyset$.
% ----------------------------------------------------------------------------------------------
% ----------------------------------------------------------------------------------------------
% ----------------------------------------------------------------------------------------------
% ----------------------------------------------------------------------------------------------
\begin{myproblem}
	\problemtitle{Liveness Verification}
	\probleminput{A broadcast network $\bcnet = (\Domain,\bcprot)$ and final states $F \subseteq Q$.}
	\problemquestion{Is there an initialized computation $\pi$ with $\Fin{\pi} \neq \emptyset$?}
\end{myproblem}
% ----------------------------------------------------------------------------------------------
% ----------------------------------------------------------------------------------------------
% ----------------------------------------------------------------------------------------------
% ----------------------------------------------------------------------------------------------

\emph{Liveness Verification} was introduced as \emph{Repeated Coverability} in\cite{Delzanno2010}.
Our main contribution is the following theorem.
% ----------------------------------------------------------------------------------------------
% ----------------------------------------------------------------------------------------------
% ----------------------------------------------------------------------------------------------
% ----------------------------------------------------------------------------------------------
\begin{theorem}
	\label{Theorem:LVPpcomplete}
	Liveness Verification is $\P$-complete.
\end{theorem}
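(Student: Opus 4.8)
\textbf{The plan.} Since liveness verification generalizes coverability --- attaching a fresh self-sending loop to a target state $q_f$ and setting $F=\{q_f\}$ yields a live computation iff $q_f$ is coverable --- $\P$-hardness follows immediately from the $\P$-hardness of coverability \cite{Delzanno2012}. The real work is membership in $\P$, for which I would design a polynomial fixed-point procedure. The enabling structural fact is a copycat principle: because a client ready to receive may always ignore the message, any coverable state can be populated by arbitrarily many clients at once (replay the covering computation on disjoint groups of clients while all others stay idle), and a covering computation can be re-run from fresh clients without disturbing those already parked. I would first make this precise, as it underlies everything below.

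\textbf{Characterization.} The decisive subtlety is that the number of clients, though arbitrary, is fixed once chosen, so a message needed infinitely often must be emitted infinitely often by the fixed pool. A sender state that is merely coverable but lies on no cycle fires only finitely often; e.g.\ with $s_0 \xrightarrow{\send a} s_1$ and a designated loop $q_0 \xrightarrow{\rec a} q_f \xrightarrow{\rec a} q_0$, the message $a$ is produced boundedly often and no client sees $F$ infinitely often, even though the sender $s_0$ is initial. Liveness therefore rests on two fixed points over the client automaton $\bcprot$. A least fixed point computes the coverable states $\mathit{Cov}$ (close $I$ under sends, and under a receive $p \xrightarrow{\rec a} q$ whenever some coverable $s$ offers $s \xrightarrow{\send a} s'$). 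A greatest fixed point computes the set $M$ of perpetually available messages: $a \in M$ iff, in the subgraph on $\mathit{Cov}$ whose receive edges are restricted to $M$, there is a cycle carrying a send $\send a$; one repeatedly discards the messages failing this test. I would then prove that liveness holds iff there is a cycle through an $F$-state, lying in $\mathit{Cov}$, all of whose receives use messages of $M$. Sufficiency follows by dedicating one group of clients to each producing cycle and one client to the $F$-cycle; necessity, by inspecting which messages the tail of a live computation emits infinitely often.

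\textbf{Graph and normal form.} To connect this to genuine computations I would phrase live computations as infinite paths in a graph whose vertices record the populated set of states (a subset of $Q$) and whose edges are broadcasts; a computation is live iff it traces a path visiting $F$-vertices infinitely often. This graph has exponentially many vertices, so it cannot be searched directly. The key lemma --- in the spirit of \cite{Fournier2015} --- is a normal form: whenever a live path exists, one exists whose populated set never shrinks until it stabilizes, after which the path decomposes into independent sustaining cycles together with the $F$-cycle. Such monotone paths are exactly what the two fixed points detect, and each converges in at most $\abs{Q}$ (resp.\ $\abs{\Domain}$) rounds with polynomial work per round, so the whole test runs in polynomial time.

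\textbf{Main obstacle.} I expect the normal form lemma to be the crux. Reshaping an arbitrary live computation into one with a monotone, eventually stable support requires careful surgery: pumping the producing cycles to guarantee a standing supply, rerouting the designated client's receives onto that supply, and invoking the copycat principle to absorb the finitely many clients that behave irregularly before stabilization. Ensuring that this rearrangement preserves initialization and the infinitely-often visit to $F$ while keeping the client count finite is where the real combinatorial content lies; the matching soundness and completeness of the fixed-point iteration should then be comparatively routine.
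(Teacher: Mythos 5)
Your plan is a genuinely different route from the paper's: the paper never works with cycles in the client automaton at all. It characterizes liveness via a configuration-level cycle $c_0 \rightarrow^* c \rightarrow^+ c$ with $\Set{c} \cap F \neq \emptyset$ (Lemma~\ref{Lemma:Splitting}), encodes such cycles as cycles in a graph whose vertices are \emph{tuples} of subsets of $Q$ (one component per state of $c$, so that every block of clients provably returns to its starting state), proves a normal-form lemma (monotonically increasing then decreasing paths), and finds normal-form cycles by a greatest fixed point $C = \Post^+_C \cap \Pre^*_C$ over constrained post/pre operators. Your alternative --- a least fixed point for coverable states plus a greatest fixed point for a self-sustaining message set $M$ --- is arguably simpler and, for plain liveness, the sufficiency side can indeed be made to work by parking clients at every position of every producing cycle and advancing them in synchronized waves (one send of each $m \in M$ per wave serves all simultaneously pending receivers). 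Your hardness argument is also fine and matches the paper's citation of \cite{Delzanno2012}.

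However, your central characterization is false as stated, and the flaw is exactly at the point where your route diverges from the paper's. You claim liveness holds iff there is a cycle \emph{through an $F$-state} in $\mathit{Cov}$ with receives in $M$. Counterexample: take transitions $i_0 \xrightarrow{\send{a}} q_f$ and $i_0 \xrightarrow{\send{b}} u$, $u \xrightarrow{\send{b}} u$, with $I = \setcon{i_0}$ and $F = \setcon{q_f}$. Liveness holds --- one client moves to $q_f$ and sits there forever (so $c_j[i] \in F$ for all large $j$, hence $\Fin{\pi} \neq \emptyset$), while a second client loops at $u$ to keep the computation infinite --- yet no cycle passes through $q_f$, so your criterion answers no. If instead you admit trivial (length-zero) cycles, the criterion fails in the other direction: with only $i_0 \xrightarrow{\send{a}} q_f$ and nothing else, it answers yes, but no infinite computation exists since finitely many clients can emit $a$ only finitely often. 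The root cause is that the client that witnesses $\Fin{\pi} \neq \emptyset$ need not participate in the recurrent behavior at all; the paper's Lemma~\ref{Lemma:Splitting} correctly requires only $\Set{c} \cap F \neq \emptyset$ while the cycle lives at the level of configurations, with other clients doing the moving. The repair in your framework is to decouple the two conditions --- liveness iff $F \cap \mathit{Cov} \neq \emptyset$ and $M \neq \emptyset$ --- which your gfp already supports; but as written the biconditional you propose to prove is simply not a theorem. Beyond this, the piece you yourself flag as the crux (the normal-form/scheduling lemma making the ``enough clients'' and synchronization bookkeeping precise, which the paper carries out with the counting invariant $\abs{\Pos_{c_j}(i,s)} \geq \abs{Q}^{\ell-j}$ in the proof of Lemma~\ref{Lemma:Cycle}) is deferred rather than proved, so the proposal is an architecture with one wrong load-bearing claim and one missing proof, not a complete argument.
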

% ----------------------------------------------------------------------------------------------
% ----------------------------------------------------------------------------------------------
% ----------------------------------------------------------------------------------------------
% ----------------------------------------------------------------------------------------------
$\P$-hardness is due to~\cite{Delzanno2012}.
Our contribution is a matching polynomial-time decision procedure. 
Key to our algorithm is the following lemma which relates the existence of an infinite computation to the existence of a finite one. 
% -------------------------------------------------------------------------------------------------
% -------------------------------------------------------------------------------------------------
% -------------------------------------------------------------------------------------------------
% -------------------------------------------------------------------------------------------------
\begin{lemma}\label{Lemma:Splitting}
	There is an infinite computation $c_0 \rightarrow c_1\rightarrow\ldots$ that visits states in $F$ infinitely often if and only if there is a finite computation of the form $c_0 \rightarrow^* c \rightarrow^+ c$ with $\Set{c} \cap F \neq \emptyset$.
\end{lemma}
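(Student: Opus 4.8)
The plan is to prove the two implications separately; the backward direction amounts to unfolding a loop, while the forward direction rests on a finiteness observation together with a double application of the pigeonhole principle.

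For the backward direction, suppose we are given a finite computation $c_0 \rightarrow^* c \rightarrow^+ c$ with $\Set{c} \cap F \neq \emptyset$. I would build the infinite computation by running the prefix $c_0 \rightarrow^* c$ once and then replaying the cycle $c \rightarrow^+ c$ forever, obtaining $c_0 \rightarrow^* c \rightarrow^+ c \rightarrow^+ c \rightarrow^+ \cdots$. Each transition in a broadcast network is determined locally by the current configuration (a sender, a set of receivers, the rest idle), so replaying the same transition sequence from the same configuration $c$ is again valid, and the client count $k$ stays fixed throughout. By hypothesis there is an index $i$ with $c[i] \in F$; since $c$ recurs infinitely often and client $i$ sits in a final state at each recurrence, the computation visits $F$ infinitely often.

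For the forward direction, the key point I want to exploit is the remark from the model definition that every configuration appearing in the given computation $\pi = c_0 \rightarrow c_1 \rightarrow \cdots$ has the same number of clients $k$. Hence all the $c_j$ live in the single set $Q^k$, which is finite because $Q$ is finite. The hypothesis ``visits states in $F$ infinitely often'' means the set of indices $j$ with $\Set{c_j} \cap F \neq \emptyset$ is infinite. Applying pigeonhole to this infinite index set, whose configurations all lie in the finite set $Q^k$, I conclude that some configuration $c$ with $\Set{c} \cap F \neq \emptyset$ occurs infinitely often in $\pi$. Picking two occurrence indices $j_1 < j_2$, the prefix yields $c_0 \rightarrow^* c_{j_1} = c$, and the segment strictly between them yields a nonempty cycle $c = c_{j_1} \rightarrow^+ c_{j_2} = c$; together with $\Set{c} \cap F \neq \emptyset$ this is exactly the finite computation required.

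The main obstacle is conceptual rather than computational: a priori the configuration space $\Conf = \bigcup_k Q^k$ is infinite, so no pigeonhole argument applies to an arbitrary computation. The entire argument hinges on the observation that fixing $c_0$ fixes the client count $k$ for the whole of $\pi$, collapsing the reachable space to the finite $Q^k$; once this is in place both directions are routine. I would also record the small side remark that the passage between ``$\Fin{\pi} \neq \emptyset$'' in the liveness problem and ``visits states in $F$ infinitely often'' in the lemma is itself a pigeonhole step over the finitely many client indices $\{1, \dots, k\}$, so the two phrasings are interchangeable.
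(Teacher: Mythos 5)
Your proposal is correct and follows essentially the same route as the paper: the backward direction iterates the cycle $c \rightarrow^+ c$, and the forward direction applies the pigeonhole principle to the infinite subsequence of configurations containing final states, which all lie in the finite set $Q^k$ because the client count is fixed along a computation. Your explicit remarks on why the pigeonhole argument applies (fixed $k$) and on the equivalence with $\Fin{\pi} \neq \emptyset$ merely make precise what the paper leaves implicit.
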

% -------------------------------------------------------------------------------------------------
% -------------------------------------------------------------------------------------------------
% -------------------------------------------------------------------------------------------------
% -------------------------------------------------------------------------------------------------
If there is a computation of the form $c_0 \rightarrow^* c \rightarrow^+ c$ with $\Set{c} \cap F \neq \emptyset$, then $c \rightarrow^+ c$ can be iterated infinitely often to obtain an infinite computation visiting $F$ infinitely often. 
In turn, in any infinite sequence from $Q^k$ one can find a repeating configuration (pigeon hole principle). 
This in particular holds for the infinite sequence of configurations containing final states.

Our polynomial-time algorithm for \emph{Liveness Verification} looks for an appropriate reachable configuration $c$ that can be iterated. 
The difficulty is that we have a parameterized system, and therefore the number of configurations is not finite. 
Our approach is to devise a finite graph in which we search for a cycle that mimics the cycle on $c$. 
While the graph yields a decision procedure, it will be of exponential size and a naive search for a cycle requires exponential time.
We show in a second step how to find a cycle in polynomial time.

The graph underlying our algorithm is inspired by the powerset construction for the determinization of finite state automata~\cite{Rabin1959}.
The vertices keep track of sets of states $S$ that a client may be in. 
Different from finite-state automata, however, there is not only one client in a state $s\in S$ but arbitrarily (but finitely) many.
As a consequence, a transition from $s$ to $s'$ may have two effects. 
Some of the clients in $s$ change their state to $s'$ while others stay in $s$. 
In that case, the set of states is updated to $S'=S\cup\setcon{s'}$. 
Alternatively, all clients may change their state to $s'$, in which case we get $S'=(S\setminus\setcon{s})\cup\setcon{s'}$. 

Formally, the graph of interest is $G = (V, \rightarrow_G)$. 
Vertices are tuples of sets of states, $V = \bigcup_{k\leq \abs{Q}}\PSet{Q}^{k}$. 
The parameter $k$ will become clear in a moment.
To define the edges, we need some more notation.
For $S \subseteq Q$ and $a \in \Domain$, let
\begin{align*}
	\Postofapp{\rec{a}}{S}= \Setcon{r' \in Q}{\exists r \in S: r \xrightarrow{\rec{a}} r'}
\end{align*} 
denote the set of successors of $S$ under transitions receiving $a$.
The set of states in $S$ where receives of $a$ are \emph{enabled} is denoted by
\begin{align*}
	\Enabledoffapp{\rec{a}}{S} = \Setcon{r \in S}{\Postofapp{\rec{a}}{\setcon{r}} \neq \emptyset}.
\end{align*}

There is a directed edge $V_1\rightarrow_{G} V_2$ from vertex $V_1=(S_1,\ldots,S_k)$ to vertex $V_2 = (S'_1,\ldots,S'_k)$ if the following three conditions are satisfied:
(1) there is an index $j \in [1..k]$, states $s \in S_j$ and $s' \in S'_j$, and an element $a$ from the domain~$\Domain$ such that $s \xrightarrow{!a} s'$ is a send transition.
(2) For each $i \in [1..k] $ there are sets of states $\Gen_i \subseteq \Postofapp{?a}{S_i}$ and \mbox{$\Kill_i \subseteq \Enabledoffapp{\rec{a}}{S_i}$} 
such that 
\begin{align*}
	S'_i = 
	\left\lbrace
	\begin{aligned}
		&(S_i \setminus \Kill_i) \cup \Gen_i, &\text{ for } i \neq j, \\
		&(U_j \setminus \Kill_j) \cup \Gen_j \cup \setcon{s'}, &\text{ for } i = j
	\end{aligned}
	\right.
\end{align*}
where $U_j$ is either $S_j$ or $S_j \setminus \setcon{s}$.
(3) For each index $i \in [1..k]$ and state $q \in \Kill_i$, the intersection $\Postofapp{\rec{a}}{q} \cap \Gen_i$ is non-empty.

Intuitively, an edge in the graph mimics a transition in the broadcast network without making explicit the configurations.
Condition (1) requires a sender, a component~$j$ capable of sending a message $a$.
Clients receiving this message are represented by (2).
The set $\Gen_i$ consists of those states that are reached by clients performing a corresponding receive transition.
These states are added to~$S_i$.
As mentioned above, states can get killed.
If, during a receive transition, all clients move to the target state, the original state will not be present anymore.
We capture those states in the set $\Kill_i$ and remove them from~$S_i$.
Condition (3) is needed to guarantee that each killed state is replaced by a target state.
Note that for component $j$ we add $s'$ due to the send transition. 
Moreover, we need to distinguish whether state \mbox{$s$ gets killed or not.}

The following lemma relates a cycle in the constructed graph with a cyclic computation of the form $c \rightarrow^+ c$. 
It is crucial for our result.
% ----------------------------------------------------------------------------------------------
% ----------------------------------------------------------------------------------------------
% ----------------------------------------------------------------------------------------------
% ----------------------------------------------------------------------------------------------
\begin{lemma}\label{Lemma:Cycle}
	There is a cycle $(\setcon{s_1}, \dots, \setcon{s_m}) \rightarrow^+_G (\setcon{s_1}, \dots, \setcon{s_m})$ in $G$ if and only if there is a configuration $c$ with $\Set{c} = \setcon{s_1,\dots s_m}$ and $c \rightarrow^+ c$.
\end{lemma}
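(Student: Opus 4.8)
The plan is to prove both directions through one uniform device: color each client by the state it occupies in the target configuration $c$, and read the components of a graph vertex as the sets of states currently reached by the clients of a fixed color. The decisive elementary observation is that in a computation $c \rightarrow^+ c$, written $c = d_0 \rightarrow d_1 \rightarrow \dots \rightarrow d_L = c$, every client returns to its own state, since $d_0[p] = d_L[p] = c[p]$ for each component $p$; this is what will make the induced path close into a cycle through the singleton vertex.

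For the direction from a cyclic computation to a graph cycle, I would group the clients into classes $P_i = \Setcon{p}{c[p] = s_i}$, one per element of $\Set{c} = \setcon{s_1, \dots, s_m}$, and set $S_i^t = \Setcon{d_t[p]}{p \in P_i}$ and $V_t = (S_1^t, \dots, S_m^t)$. Then $V_0 = V_L = (\setcon{s_1}, \dots, \setcon{s_m})$ by the observation above, so the path is already a cycle through the required vertex. It remains to check that each step $d_t \rightarrow d_{t+1}$, carrying a message $a$, is a graph edge $V_t \rightarrow_G V_{t+1}$: the sender lies in a unique class $j$ and yields condition (1), I would read $\Gen_i$ off the targets of color-$i$ clients that receive $a$ and $\Kill_i$ off the states of class $i$ that become unoccupied because all their clients moved, and condition (3) holds because each vanished non-sender state was left through a receive transition landing in $\Gen_i$. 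The only subtlety is the sender's own state $s$: if it becomes unoccupied I would account for it through the choice $U_j = S_j \setminus \setcon{s}$ rather than through $\Kill_j$, which is exactly what the edge definition permits.

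For the converse, which I expect to be the hard part, I am given a cycle $V_0 \rightarrow_G \dots \rightarrow_G V_L = V_0$ with $V_0 = (\setcon{s_1}, \dots, \setcon{s_m})$ and must manufacture a configuration and a broadcast sequence realizing each edge. For each color $i$ I would unfold time into a layered graph on the nodes $(t,q)$ with $q \in S_i^t$, with idle edges $(t,q) \to (t+1,q)$ whenever $q$ survives, receive edges into the states of $\Gen_i$, and the distinguished send edge at the sender. Two reachability facts drive the construction: every node is reachable from the source $(0,s_i)$, by forward induction using that each state of $\Gen_i$ has a receiving predecessor because $\Gen_i \subseteq \Postofapp{\rec{a}}{S_i^t}$; and every node co-reaches the sink $(L,s_i)$, by backward induction using condition (3) for states removed through $\Kill_i$ and the send edge for a sender state removed through $U_j$. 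Populating these layered graphs with clients routed along source-to-sink paths should yield a computation whose occupied states at layer $t$ are exactly $S_i^t$, so that at $t = 0 = L$ we recover $c$ with $\Set{c} = \setcon{s_1, \dots, s_m}$ and $c \rightarrow^+ c$.

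The main obstacle is the bookkeeping of multiplicities in this last direction. A single broadcast step admits arbitrarily many receivers but exactly one sender, so I must place enough clients in each active state to cover every state of $\Gen_i$ and to keep a witness in every surviving state, while still being able to fully empty every state that is killed, and, crucially, to return to the very same multiset of clients after $L$ steps so that the run is a genuine cycle and not merely a path between two configurations with equal state sets. I expect to control this by working with sufficiently but finitely many copies per color, using that superfluous clients may always idle through a step, and by treating the unique sender state with care so that it is vacated by a single send precisely when the edge removes it via $U_j$. Converting the existence of a covering flow in the layered graphs into a legal broadcast sequence subject to the one-sender-per-step constraint is where the genuine work lies.
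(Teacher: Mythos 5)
Your forward direction is correct and is essentially the paper's own argument: your color classes $P_i$ and the sets $S_i^t$ are exactly the paper's $\Target_c(s_i, d_t)$ sets, the closing of the cycle via $d_0[p] = d_L[p]$ is the same observation that $\Target_c(s_i,c) = \setcon{s_i}$, and your handling of the sender's state through $U_j = S_j \setminus \setcon{s}$ versus $\Kill_j$ matches the edge definition. Your two reachability facts about the layered graphs in the converse are also true and correctly derived from conditions (2) and (3).

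However, the converse direction has a genuine gap, and you name it yourself: ``converting the existence of a covering flow \ldots is where the genuine work lies'' is precisely the part of the proof that is missing, and it is the part the paper actually carries out. The missing idea is a concrete quantitative budget with a per-step invariant. The paper starts with $n = m \cdot \abs{Q}^\ell$ clients, $\abs{Q}^\ell$ per color block (where $\ell$ is the cycle length), and maintains: after realizing the $j$-th edge, every state of $S^i_j$ is occupied by at least $\abs{Q}^{\ell-j}$ clients of block $i$. Each edge $V_j \rightarrow_G V_{j+1}$ is then realized in two phases: first a burst of $\abs{Q}^{\ell-(j+1)}$ consecutive transitions firing the send $s \xrightarrow{\send{a}} s'$, where all receive moves (delivering $\abs{Q}^{\ell-(j+1)}$ clients to each state of each $\Gen_i$) synchronize with the \emph{first} send and the remaining sends have no receivers; then a cleanup phase that completely empties every killed state by routing its leftover clients along the receives guaranteed by condition (3), synchronized with a send vacating $s$ (when $U_j = S_j \setminus \setcon{s}$) or with the last send of the burst. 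The budget closes because a source state $p$ must supply at most $\abs{\Postofapp{\rec{a}}{p} \cup \setcon{p,s'}} \leq \abs{Q}$ targets with $\abs{Q}^{\ell-(j+1)}$ clients each, i.e.\ at most $\abs{Q}^{\ell-j}$ clients, which the invariant provides. Without this (or an equivalent) scheme, your ``sufficiently many copies'' claim is unsubstantiated: the number of clients must be fixed \emph{before} the computation starts, yet has to survive $\ell$ rounds of fan-out. Two further points your sketch gets wrong or leaves open: superfluous clients may \emph{not} ``always idle through a step''---a client idling in a killed state leaves that state occupied and destroys the correspondence with $S^i_j$ (this is exactly what the cleanup phase and condition (3) exist for); and you need the final configuration to equal $c$ as a \emph{tuple}, not merely as a multiset---the paper's blockwise invariant (1) delivers this for free, since at step $\ell$ block $i$ contains only the state $s_i$, whereas your flow formulation would still owe an argument (e.g.\ a permutation of client indices) to close the run into a genuine cycle $c \rightarrow^+ c$.
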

% ----------------------------------------------------------------------------------------------
% ----------------------------------------------------------------------------------------------
% ----------------------------------------------------------------------------------------------
% ----------------------------------------------------------------------------------------------
The lemma explains the restriction of the nodes in the graph to $k$-tuples of sets of states, with $k\leq\abs{Q}$. 
We explore the transitions for every possible state in $c$, and there are at most $\abs{Q}$ different states that have to be considered.
We have to keep the sets of states separately to make sure that, for every starting state, the corresponding clients perform a cyclic computation.
% ----------------------------------------------------------------------------------------------
% ----------------------------------------------------------------------------------------------
% ----------------------------------------------------------------------------------------------
% ----------------------------------------------------------------------------------------------
\begin{proof}
	We fix some notations that we use throughout the proof.
	Let $c \in Q^n$ be any configuration and $s \in \Set{c}$.
	By $\Pos_c(s) = \Setcon{i \in [1..n]}{c[i] = s}$ we denote the positions of $c$ storing state $s$.
	Given a second configuration $d \in Q^n$, we use the set $\Target_c(s,d) = \Setcon{d[i]}{ i \in \Pos_c(s) }$ to represent those states that occur in $d$ at the positions $\Pos_c(s)$.
	Intuitively, if there is a sequence of transitions from $c$ to $d$, these are the target states of those positions of $c$ storing $s$.
	
	Consider a computation $\pi = c \rightarrow^+ c$ with $\Set{c} = \setcon{s_1, \dots, s_m}$.
	We show that there is a cycle $(\setcon{s_1}, \dots, \setcon{s_m}) \rightarrow^+_G (\setcon{s_1}, \dots, \setcon{s_m})$ in $G$.
	To this end, assume $\pi$ is of the form $\pi = c \rightarrow c_1 \rightarrow \dots \rightarrow c_\ell \rightarrow c$.
	Since $c \rightarrow c_1$ is a transition in the broadcast network, there is an edge
	\begin{align*}
		(\setcon{s_1}, \dots, \setcon{s_m}) \rightarrow_G (\Target_c(s_1,c_1), \dots, \Target_c(s_m,c_1))
	\end{align*} 
	in $G$ where each state $s_i$ gets replaced by the set of target states in $c_1$.
	Applying this argument inductively, we get a path in the graph:
	\begin{align*}
		(\setcon{s_1}, \dots, \setcon{s_m}) &\rightarrow_G (\Target_c(s_1,c_1), \dots, \Target_c(s_m,c_1)) \\
		&\rightarrow_G (\Target_c(s_1,c_2), \dots, \Target_c(s_m,c_2)) \\
		&\rightarrow_G \dots \\
		&\rightarrow_G (\Target_c(s_1,c), \dots, \Target_c(s_m,c)).
	\end{align*} 
	Since $\Target_c(s_i,c) = \setcon{s_i}$,we found the desired cycle.
	
	For the other direction, let a cycle $\sigma = (\setcon{s_1}, \dots, \setcon{s_m}) \rightarrow^+_G (\setcon{s_1}, \dots, \setcon{s_m})$ be given.
	We construct from $\sigma$ a computation $\pi = c \rightarrow^+ c$ in the broadcast network such that $\Set{c} = \setcon{s_1, \dots, s_m}$.
	The difficulty in constructing~$\pi$ is to ensure that at any point in time there are enough clients in appropriate states.
	For instance, if a transition $s \xrightarrow{\send{a}} s'$ occurs, we need to decide on how many clients to move to $s'$.
	Having too few clients in $s'$ may stall the computation at a later point:
	there may be a number of sends required that can only be obtained by transitions from $s'$.
	If there are too few clients in $s'$, we cannot guarantee the sends.
	The solution is to start with \emph{enough} clients in any state.
	With invariants we guarantee that at any point in time, the number of clients in the needed states suffices.
	
	Let cycle $\sigma$ be $V_0 \rightarrow_G V_1 \rightarrow_G \dots \rightarrow_G V_\ell$ with \mbox{$V_0 = V_\ell = (\setcon{s_1}, \dots, \setcon{s_m})$.}
	Further, let $V_j = (S^1_j, \dots, S^m_j)$.
	We will construct the computation $\pi$ over configurations in $Q^n$ where $n = m \cdot \abs{Q}^\ell$.
	The idea is to have $\abs{Q}^\ell$ clients for each of the $m$ components of the vertices $V_i$ occurring in $\sigma$.
	To access the clients belonging to a particular component, we split up configurations in $Q^n$ into \emph{blocks}, intervals $I(i) = [(i-1) \cdot \abs{Q}^\ell + 1 \; .. \; i \cdot \abs{Q}^\ell]$ for each $i \in [1..m]$.
	Let $d \in Q^n$ be arbitrary.
	For $i \in [1..m]$, let $B_d(i) = \Setcon{d[t]}{t \in I(i)}$ be the set of states occurring in the $i$-th block of $d$.
	Moreover, we blockwise collect clients that are currently in a particular state $s \in Q$.
	Let the set \mbox{$\Pos_d(i,s) = \Setcon{t \in I(i)}{d[t] = s}$} be those positions of $d$ in the $i$-th block that store state $s$.
	
	We fix the configuration $c \in Q^n$.
	For each component $i \in [1..m]$, in the $i$-th block it contains $\abs{Q}^\ell$ copies of the state $s_i$.
	Formally, $B_c(i) = \setcon{s_i}$.
	Our goal is to construct the computation $\pi = c_0 \rightarrow^+ c_1 \rightarrow^+ \dots \rightarrow^+ c_\ell$ with \mbox{$c_0 = c_\ell = c$} such that the following two invariants are satisfied.
	(1) For each $j \in [0..\ell]$ and \mbox{$i \in [1..m]$} we have $B_{c_j}(i) \subseteq S^i_j$.
	(2) For any state $s$ in a set $S^i_j$ we have $\abs{\Pos_{c_j}(i,s)} \geq \abs{Q}^{\ell-j}$.
	Intuitively, (1) means that during the computation $\pi$ we visit at most those states that occur in the cycle $\sigma$.
	Invariant (2) guarantees that at each configuration $c_j$ there are \emph{enough} clients available in these states.
	
	We construct $\pi$ inductively.
	The base case is given by configuration $c_0 = c$ which satisfies Invariants (1) and (2) by definition.
	For the induction step, assume $c_j$ is already constructed such that (1) and (2) hold for the configuration.
	Our first goal is to construct a configuration $d$ such that $c_j \rightarrow^+ d$ and $d$ satisfies Invariant (2).
	In a second step we construct a computation $d \rightarrow^* c_{j+1}$.
	
	In the cycle $\sigma$ there is an edge $V_j \rightarrow_G V_{j+1}$.
	From the definition of $\rightarrow_G$ we get a component $t \in [1..m]$, states $s \in S^t_j$ and $s' \in S^t_{j+1}$, and an $a \in \Domain$ such that there is a send transition $s \xrightarrow{\send{a}} s'$.
	Moreover, there are sets \mbox{$\Gen_t \subseteq \Postofapp{\rec{a}}{S^t_j}$} and $\Kill_t \subseteq \Enabledoffapp{\rec{a}}{S^t_j}$ such that the following holds:
	\begin{align*}
		S^t_{j+1} = (U_t \setminus \Kill_t) \cup \Gen_t \cup \setcon{s'}.
	\end{align*} 
	Here, $U_t$ is either $S^t_j$ or $S^t_j \setminus \setcon{s}$. We focus on $t$ and take care of other components later.
	We apply a case distinction for the states in $S_{j+1}^t$.
	
	Let $q$ be a state in $S_{j+1}^t \setminus \setcon{s'}$.
	If $q \in \Gen_t$, there exists a $p \in S^t_j$ such that $p \xrightarrow{\rec{a}} q$.
	We apply this transition to $\abs{Q}^{\ell - (j+1)}$ many clients in the $t$-th block of configuration $c_j$.
	If $q \in U_t \setminus \Kill_t$ and $q$ not in $\Gen_t$, then certainly $q \in U_t \subseteq S^t_j$.
	In this case, we let $\abs{Q}^{\ell-(j+1)}$ many clients of block $t$ stay idle in state $q$.
	For state~$s'$, we apply a sequence of sends.
	More precise, we apply the transition $s \xrightarrow{\send{a}} s'$ to $\abs{Q}^{\ell - (j+1)}$ many clients in block $t$ of $c_j$.
	The first of these sends synchronizes with the previously described receive transitions.
	The other sends do not have any receivers.
	For components different from $t$, we apply the same procedure.
	Since there are only receive transitions, we also let them synchronize with the first send of $a$.
	This leads to a computation $\tau$
	\begin{align*}
		c_j \xrightarrow{a} d^1 \xrightarrow{a} d^2 \xrightarrow{a} \dots \xrightarrow{a} d^{\abs{Q}^{\ell-(j+1)}} = d.
	\end{align*}
	We argue that the computation $\tau$ is \emph{valid}: there are enough clients in $c_j$ such that $\tau$ can be carried out. 
	We again focus on component $t$, the reasoning for the other components is similar.
	Let $p \in \Set{c_j} = S_j^t$.
	Note that the equality is due to Invariants (1) and (2).
	We count the clients of $c_j$ in state $p$ (in block~$t$) that are needed to perform $\tau$.
	We need 
	\begin{align*}
		\abs{Q}^{\ell - (j+1)} \cdot \abs{\Postofapp{\rec{a}}{p} \cup \setcon{p,s'}}
		\leq \abs{Q}^{\ell - (j+1)} \cdot \abs{Q} = \abs{Q}^{\ell-j}
	\end{align*}
	of these clients.
	The set $\Postofapp{\rec{a}}{p} \cup \setcon{p,s'}$ appears as a consequence of the case distinction above:
	there may be transitions mapping $p$ to a state in $\Postofapp{\rec{a}}{p}$, it may happen that clients stay idle in $p$, and in the case $p = s$, we need to add $s'$ for the send transition.
	Since $\abs{\Pos_{c_j}(t,p)} \geq \abs{Q}^{\ell-j}$ by Invariant (2), we get that $\tau$ is a valid computation.
	Moreover, note that configuration $d$ satisfies Invariant (2) for $j+1$:
	for each state $q \in S_{j+1}^t$, the computation $\tau$ \mbox{was constructed such that $\abs{\Pos_d(t,q)} \geq \abs{Q}^{\ell-(j+1)}$.}
	
	To satisfy Invariant (1), we need to erase states that are present in $d$ but not in $S_{j+1}^t$.
	To this end, we reconsider the set $\Kill_t \subseteq \Enabledoffapp{\rec{a}}{S_j^t}$.
	For each state $p \in \Kill_t$, we know by the definition of $\rightarrow_G$ that $\Postofapp{\rec{a}}{p} \cap \Gen_t \neq \emptyset$.
	Hence, there is a $q \in S_{j+1}^t$ such that $p \xrightarrow{\rec{a}} q$.
	We apply this transition to all clients in $d$ currently in state $p$ that were not active in the computation $\tau$.
	In case $U_t = S_j^t \setminus \setcon{s}$, we apply the send $s \xrightarrow{\send{a}} s'$ to all clients that are still in $s$ and were not active in $\tau$.
	Altogether, this leads to a computation $\eta = d \rightarrow^* c_{j+1}$. 
	
	There is a subtlety in the definition of $\eta$.
	There may be no send transition for the receivers to synchronize with since $s$ may not need to be erased.
	In this case, we synchronize the receive transitions of $\eta$ with the last send of $\tau$.
	
	Computation $\eta$ substitutes the states in $\Kill_t$ and state $s$, depending on $U_t$, by states in $S_{j+1}^t$.
	But this means that in the $t$-th block of $c_{j+1}$, there are only states of $S_{j+1}^t$ left.
	Hence, $B_{c_{j+1}}(t) \subseteq S_{j+1}^t$, and Invariant (1) holds.
	
	After the construction of $\pi = c \rightarrow^+ c_\ell$, it is left to argue that $c_\ell = c$.
	But this is due to the fact that (1) holds for $c_\ell$ and $S_{\ell}^t = (\setcon{s_1}, \dots, \setcon{s_m})$.
	\qed
\end{proof}
% ----------------------------------------------------------------------------------------------
% ----------------------------------------------------------------------------------------------
% ----------------------------------------------------------------------------------------------
% ----------------------------------------------------------------------------------------------

The graph $G$ is of exponential size.
To obtain a polynomial-time procedure, we cannot just search it for a cycle as required by Lemma~\ref{Lemma:Cycle}.
Instead, we now show that if such a cycle exists, then there is a cycle in a certain normal form.
Hence, it suffices to look for a normal-form cycle.
As we will show, this can be done in polynomial time.
We define the normal form more generally for paths.

A path is in \emph{normal form}, if it takes the shape $V_1\rightarrow_G^* V_m\rightarrow_G^* V_n$ such that the following conditions hold.
In the prefix $V_1\rightarrow_G^* V_m$ the sets of states increase monotonically, $V_i \sqsubseteq V_{i+1}$ for all $i\in [1..m-1]$. 
Here, $\sqsubseteq$ denotes the componentwise inclusion.
In the suffix $V_m\rightarrow_G^* V_n$, the sets of states decrease monotonically, $V_i\sqsupseteq V_{i+1}$ for all $i\in [m..n-1]$. 
The following lemma states that if there is a path in the graph, then there is also a path in normal form.
The intuition is that the variants of the transitions that decrease the sets of states 
can be postponed towards the end of the computation. 
% ----------------------------------------------------------------------------------------------
% ----------------------------------------------------------------------------------------------
% ----------------------------------------------------------------------------------------------
% ----------------------------------------------------------------------------------------------
\begin{lemma} \label{Lemma:Normalform}
	There is a path from $V_1$ to $V_2$ in $G$ if and only if there is a path in normal form from $V_1$ to $V_2$.	
\end{lemma}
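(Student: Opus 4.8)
The plan is to prove the nontrivial direction—that every path can be reshaped into normal form—by a sorting argument, since the reverse direction is immediate (a normal-form path is in particular a path). Writing $V \sqsubseteq W$ for componentwise inclusion, I call an edge $V \rightarrow_G W$ \emph{increasing} if $V \sqsubseteq W$ and \emph{decreasing} if $W \sqsubseteq V$. I would proceed in two stages: first split every single edge into an increasing step followed by a decreasing step, and then repeatedly commute a decreasing step past a following increasing step, in the style of a bubble sort, until all increasing steps precede all decreasing ones. The resulting path then has the shape $V_1 \sqsubseteq \dots \sqsubseteq V_m \sqsupseteq \dots \sqsupseteq V_n$, which is exactly normal form.

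For the \emph{splitting} stage, consider an edge $V \rightarrow_G V'$ realized by a send $s \xrightarrow{\send{a}} s'$ in component $j$ together with sets $\Gen_i,\Kill_i$. I first perform the pure-growth step using the same $\Gen_i$, empty kill sets, and $U_j = S_j$; this yields an intermediate vertex $V^+$ with $V^+_i = S_i \cup \Gen_i$ (and $\setcon{s'}$ added in component $j$), so $V \sqsubseteq V^+$ and condition (3) holds vacuously. The second step from $V^+$ to $V'$ reuses the original $\Gen_i$ and $\Kill_i$. A short computation shows $(V^+_i \setminus \Kill_i) \cup \Gen_i = (S_i \setminus \Kill_i) \cup \Gen_i = V'_i$, that $\Gen_i \subseteq V^+_i$ forces $V' \sqsubseteq V^+$, and that condition (3) transfers verbatim from the original edge. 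Hence every edge decomposes as increasing-then-decreasing, with the increasing step carrying empty kill sets.

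The heart of the argument is the commutation, or \emph{swap}, lemma: if $A \rightarrow_G B$ is decreasing and $B \rightarrow_G C$ is increasing with empty kill, then there is a $B'$ with $A \rightarrow_G B' \rightarrow_G C$ where the first edge is increasing and the second decreasing. Letting the growth step $B \rightarrow_G C$ carry sets $\Gen^2_i$ and sender $s_2 \xrightarrow{\send{a_2}} s'_2$, I replay it from the larger vertex $A$: since $B \sqsubseteq A$ gives $\Gen^2_i \subseteq \Postofapp{\rec{a_2}}{B_i} \subseteq \Postofapp{\rec{a_2}}{A_i}$ by monotonicity of $\Post$, the step $A \rightarrow_G B'$ with $B'_i = A_i \cup \Gen^2_i$ is a valid increasing edge. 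For $B' \rightarrow_G C$ I reuse the message $a_1$ and send $s_1 \xrightarrow{\send{a_1}} s'_1$ of the original decreasing step, together with gen sets $\Gen^1_i$ and the adjusted kill sets $\Kill^1_i \setminus \Gen^2_i$. The key calculation is that removing $\Kill^1_i \setminus \Gen^2_i$ from $B'_i = A_i \cup \Gen^2_i$ and adding back $\Gen^1_i$ reproduces $C_i = (A_i \setminus \Kill^1_i) \cup \Gen^1_i \cup \Gen^2_i$ exactly; the correction $\setminus \Gen^2_i$ is precisely what protects the states that the growth step regenerates and that must survive in $C$. Because any decreasing edge satisfies $\Gen^1_i \subseteq A_i$, the witnesses for condition (3) of the original step lie in $A_i \subseteq B'_i$, so condition (3) is preserved for the new decreasing edge, while $\Gen^1_i \subseteq \Postofapp{\rec{a_1}}{A_i} \subseteq \Postofapp{\rec{a_1}}{B'_i}$ and monotonicity of $\Enabled$ keep the gen and kill sets admissible.

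Finally, starting from the split path—an alternating word of increasing and decreasing steps—each application of the swap lemma moves one decreasing step past one increasing step while preserving the two endpoints, hence preserving global validity and the overall source and target; it also preserves the invariant that increasing steps have empty kill. Since each swap strictly decreases the number of pairs in which a decreasing step precedes an increasing step, the process terminates, and in the terminal path all increasing steps come first, giving the normal form. I expect the main obstacle to be the swap lemma, and within it the bookkeeping for the two distinguished sender components $j_1,j_2$ (including the case $j_1 = j_2$ and the two choices of $U_j$), where the send states $s'_1,s'_2$ must be shown present in the targets; the monotonicity of $\Post$ and $\Enabled$ under $\sqsubseteq$, together with the observation that a decreasing edge only generates states already present in its source, is what makes these cases go through.
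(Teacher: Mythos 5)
Your proposal is correct in substance, but it takes a genuinely different route from the paper. The paper argues globally: it first replays the whole path with all kill sets empty, obtaining a monotonically increasing path from $V_1$ to some $V_2' \sqsupseteq V_2$, and then appends a decreasing path that deletes the surplus states in a carefully chosen order --- the order of \emph{last occurrences} of the deleted (state, component) pairs along the original path, formalized via equivalence classes and a total order. That ordering is exactly what guarantees each deleting edge (replayed with empty $\Gen$ sets) still finds the states it needs. You instead argue locally: split every edge into an increasing step with empty kills followed by a decreasing step, then bubble-sort via a swap lemma. Your swap lemma is sound for the generic components: the identity $(A_i \cup \Gen^2_i) \setminus (\Kill^1_i \setminus \Gen^2_i) = (A_i \setminus \Kill^1_i) \cup \Gen^2_i$, the monotonicity of $\Post$ and $\Enabled$ under $\sqsubseteq$, the observation that a decreasing edge satisfies $\Gen^1_i \subseteq B_i \subseteq A_i$, and the fact that condition (3) survives shrinking the kill set while keeping $\Gen^1_i$ fixed all check out. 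The sender-component bookkeeping you flagged is a real (but fixable) residue: in component $j_2$, the kill set $\Kill^1_{j_2} \setminus \Gen^2_{j_2}$ is not quite enough when $s_2' \in \Kill^1_{j_2}$, since $s_2' \in C_{j_2}$ by condition (1) yet nothing re-adds it --- you must take $\Kill^1_{j_2} \setminus (\Gen^2_{j_2} \cup \setcon{s_2'})$; symmetrically, in component $j_1$ the choice of $U'_{j_1}$ must track whether $s_1 \in C_{j_1}$ (keep $s_1$ precisely when $s_1 \in \Gen^2_{j_1}$, delete it via $U'_{j_1} = B'_{j_1} \setminus \setcon{s_1}$ otherwise), and the case $j_1 = j_2$ combines both fixes. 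With these adjustments every case goes through, and termination by counting inversions is standard. As to what each approach buys: the paper's last-occurrence ordering disposes of all deletions in one sweep with no per-edge case analysis, but the availability argument is indirect (a proof by contradiction against the order); your commutation is self-contained and purely equational per swap, at the price of heavier sender-component casework. Both yield the same normal form, since that notion constrains only the vertex sequence $V_1 \sqsubseteq \dots \sqsubseteq V_m \sqsupseteq \dots \sqsupseteq V_n$ and not the $\Gen$/$\Kill$ annotations of the edges.
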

% ----------------------------------------------------------------------------------------------
% ----------------------------------------------------------------------------------------------
% ----------------------------------------------------------------------------------------------
% ----------------------------------------------------------------------------------------------
\begin{proof}
	If $V_1 \rightarrow^*_G V_2$ is a path in normal form, there is nothing to prove.
	For the other direction, let $\sigma = V_1 \rightarrow^*_G V_2$ be an arbitrary path.
	To get a path in normal form, we first simulate the edges of $\sigma$ in such a way that no states are deleted.
	In a second step, we erase the states that should have been deleted.
	We have to respect a particular deletion order to obtain a valid path.
	
	Let $\sigma = U_1 \rightarrow_G U_2 \rightarrow_G \dots \rightarrow_G U_\ell$ with $U_1 = V_1$ and $U_\ell = V_2$.
	We inductively construct an increasing path $\sigma_{\inc} = U'_1 \rightarrow_G \dots \rightarrow_G U'_\ell$ with \mbox{$U'_j \sqsupseteq U_i$} or all $i\leq j$ by mimicking the edges of $\sigma$.
	
	For the base case, we set $U'_1 = U_1$.
	Now assume $\sigma_{\inc}$ has already been constructed up to vertex $U'_j$.
	There is an edge $e = U_j \rightarrow_G U_{j+1}$ in $\sigma$.
	Since \mbox{$U'_j \sqsupseteq U_j$}, we can simulate $e$ on $U'_j$:
	all states needed for execution are present in $U'_j$.
	Moreover, we can mimic $e$ such that no state gets deleted.
	This is achieved by setting the corresponding $\Kill$-sets to be empty.
	Hence, we get an edge $U'_j \rightarrow U'_{j+1}$ with $U'_{j+1} \sqsupseteq U'_j$ (no deletion) and \mbox{$U'_{j+1} \sqsupseteq U_{j+1}$ (simulation of $e$).}
	
	The states in $V'_2 = U'_\ell$ that are not in $V_2$ are those states that were deleted along $\sigma$.
	We construct a decreasing path $\sigma_{\dec} = V'_2 \rightarrow_G^* V_2$, deleting all these states.
	To this end, let $V'_2 = (T_1, \dots, T_m)$ and $V_2 = (S_1, \dots, S_m)$.
	An edge in $\sigma$ deletes sets of states in each component $i \in [1..m]$.
	Hence, to mimic the deletion, we need to consider subsets of $\Del = \bigcup_{i \in [1..m]} (T_i \setminus S_i) \times \setcon{i}$.
	Note that the index $i$ in a tuple $(s,i)$ displays the component the state $s$ is in.
	
	Consider the equivalence relation $\sim$ over $\Del$ defined by $(x,i) \sim (y,t)$ if and only if the last occurrence of $x$ in component $i$ and $y$ in component $t$ in the path $\sigma$ coincide.
	Intuitively, two elements are equivalent if they get deleted at the same time and do not appear again in $\sigma$.
	We introduce an order on the equivalence classes:
	$[(x,i)]_{\sim} < [(y,t)]_{\sim}$ if and only if the last occurrence of $(x,i)$ was before the last occurrence of $(y,t)$.
	Since the order is total, we get a partition of $\Del$ into equivalence classes $P_1, \dots, P_n$ such that $P_j < P_{j+1}$.
	
	We construct $\sigma_{\dec} = K_0 \rightarrow_G \dots \rightarrow_G K_n$ with $K_0 = V'_2$ and $K_n = V_2$ as follows.
	During each edge $K_{j-1} \rightarrow_G K_j$, we delete precisely the elements in $P_j$ and do not add further states.
	Deleting $P_j$ is due to an edge \mbox{$e = U_{k} \rightarrow_G U_{k + 1}$} of $\sigma$.
	We mimic $e$ in such a way that no state gets added and set the corresponding Gen sets to the empty set.
	Since we respect the order~$<$ with the deletions, the simulation of $e$ is possible.
	Suppose, we need a state $s$ in component $t$ to simulate $e$ but the state is not available in component $t$ of $K_{j-1}$.
	Then it was deleted before, $(s,t) \in P_1 \cup \dots \cup P_{j-1}$.
	But this contradicts that $s$ is present in $U_{k}$.
	Hence, all the needed states are available.
	
	Since after the last edge of $\sigma_{\dec}$ we have deleted all elements from $\Del$, we get that $K_n = V_2$.
	This concludes the proof.
	\qed
\end{proof}
% ----------------------------------------------------------------------------------------------
% ----------------------------------------------------------------------------------------------
% ----------------------------------------------------------------------------------------------
% ----------------------------------------------------------------------------------------------

Using the normal-form result in Lemma~\ref{Lemma:Normalform}, we now give a polynomial-time algorithm to check whether $(\{ s_1 \}, \dots, \{ s_m\}) \rightarrow^+_G (\{ s_1 \}, \dots, \{ s_m\})$.
The idea is to mimic the monotonically increasing prefix of the path by a suitable post operator, the monotonically decreasing suffix by a suitable pre operator, and intersect the two.
The difficulty in computing an appropriate post operator is to ensure that the receive operations are enabled by sends leading to a state in the intersection, and similar for pre.
The solution is to use a greatest fixed-point computation.
In a first Kleene iteration step, we determine the ordinary $\Post^+$ of $(\{ s_1 \}, \dots, \{ s_m\})$  and intersect it with the $\Pre^*$. 
In the next step, we constrain the $\Post^+$ and the $\Pre^*$ computations to visiting only states in the previous intersection. 
The results are intersected again, which may remove further states.
Hence, the computation is repeated relative to the new intersection.
The thing to note is that we do not work with standard post and pre operators but with operators that are constrained by (tuples of) sets of states.

For the definition of the operators, consider $C = (C_1,\dots,C_m)\in {\PSet{Q}}^m$ for an $m \leq \abs{Q}$.
Given a sequence of sets of states $X_1,\dots,X_m$ where each $X_i \subseteq C_i$, we define $\Post_C(X_1,\dots,X_m)=(X'_1,\dots,X'_m)$ with 
% ----------------------------------------------------------------------------------------------
% ----------------------------------------------------------------------------------------------
% ----------------------------------------------------------------------------------------------
% ----------------------------------------------------------------------------------------------
\begin{align*}
	X'_i\; = &\; \Setcon{s'\in Q}{\exists s \in X_i : s \xrightarrow{\send{a}}_{\bcprot \downarrow_{C_i}} s'}\\
	\cup &\; \Setcon{ s'\in Q}{\exists s_1,s_2 \in X_\ell: \exists s \in X_i: s_1 \xrightarrow{\send{a}}_{\bcprot \downarrow_{C_\ell}}  s_2\wedge  s \xrightarrow{\rec{a}}_{\bcprot \downarrow_{C_i}} s'}\; .
\end{align*} 
% ----------------------------------------------------------------------------------------------
% ----------------------------------------------------------------------------------------------
% ----------------------------------------------------------------------------------------------
% ----------------------------------------------------------------------------------------------
Here, $\bcprot\!\downarrow_{C_{i}}$ denotes the automaton obtained from $\bcprot$ by restricting it to the states $C_{i}$.
Similarly, we define $\Pre_C(X_1,\dots,X_m)=(X'_1,\dots,X'_m)$ with 
% ----------------------------------------------------------------------------------------------
% ----------------------------------------------------------------------------------------------
% ----------------------------------------------------------------------------------------------
% ----------------------------------------------------------------------------------------------
\begin{align*}
	X'_i\; = &\; \Setcon{s\in Q}{\exists s' \in X_i: s \xrightarrow{\send{a}}_{\bcprot \downarrow_{C_i}} s'}\\
	\cup &\; \Setcon{ s\in Q}{\exists s_1,s_2 \in X_\ell: \exists s' \in X_i: s_1 \xrightarrow{\send{a}}_{\bcprot \downarrow_{C_\ell}}  s_2 \wedge s \xrightarrow{\rec{a}}_{\bcprot \downarrow_{C_i}} s'}\ . 
\end{align*}
% ----------------------------------------------------------------------------------------------
% ----------------------------------------------------------------------------------------------
% ----------------------------------------------------------------------------------------------
% ----------------------------------------------------------------------------------------------
The next lemma shows that the (reflexive) transitive closures of these operators can be computed in polynomial time. 
% ----------------------------------------------------------------------------------------------
% ----------------------------------------------------------------------------------------------
% ----------------------------------------------------------------------------------------------
% ----------------------------------------------------------------------------------------------
\begin{lemma}\label{Lemma:PolyPostPre}
	The closures $\Post^+_C(X_1,\dots,X_m)$ and $\Pre^*_C(X_1,\dots,X_m)$ can be computed in polynomial time.
\end{lemma}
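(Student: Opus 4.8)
The plan is to view both operators as monotone functions on a finite lattice of polynomial height and to compute the closures by a bounded Kleene iteration. Concretely, fix $C = (C_1,\dots,C_m)$ and work in the lattice $L = \PSet{C_1} \times \dots \times \PSet{C_m}$, ordered by the componentwise inclusion $\sqsubseteq$ already used above, with componentwise union $\sqcup$. The first step is to check that $\Post_C$ and $\Pre_C$ are monotone with respect to $\sqsubseteq$: inspecting the two defining set-comprehensions, enlarging any input component $X_\ell$ can only add further source states for the send and receive clauses and, crucially, can only make more enabling sends $s_1 \xrightarrow{\send{a}}_{\bcprot \downarrow_{C_\ell}} s_2$ available; hence every output component can only grow. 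Thus $X \sqsubseteq Y$ implies $\Post_C(X) \sqsubseteq \Post_C(Y)$, and the same holds for $\Pre_C$.

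With monotonicity in hand, I would compute the closures as least fixed points of the accumulating maps $\widehat{\Post}_C(Y) = Y \sqcup \Post_C(Y)$ and $\widehat{\Pre}_C(Y) = Y \sqcup \Pre_C(Y)$, which are monotone and extensive. To obtain the reflexive closure $\Pre^*_C(X_1,\dots,X_m)$ I seed the Kleene iteration at $(X_1,\dots,X_m)$; to obtain $\Post^+_C(X_1,\dots,X_m)$, which demands at least one step, I seed it at $\Post_C(X_1,\dots,X_m)$ and then saturate under $\widehat{\Post}_C$. Because $\widehat{\Post}_C$ is extensive and monotone, the iterates form a $\sqsubseteq$-increasing chain, and every strict step adds at least one state to some component. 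Since the total number of states across all components is bounded by $\sum_{i=1}^m \abs{C_i} \le m \cdot \abs{Q} \le \abs{Q}^2$, the chain stabilizes after at most $\abs{Q}^2$ iterations, at which point the fixed point is the desired closure.

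It then remains to bound the cost of a single application of $\Post_C$ (the argument for $\Pre_C$ is identical). For each component $i \in [1..m]$ and each message $a \in \Domain$ one scans the transition relation $\delta$: the send clause contributes the targets of send transitions from $X_i$ inside $\bcprot \downarrow_{C_i}$, and the receive clause contributes the targets of receive transitions from $X_i$ whenever some component $X_\ell$ already admits a send of $a$ inside $\bcprot \downarrow_{C_\ell}$. This is a fixed number of passes over $\delta$ for each pair $(i,a)$, hence polynomial in $\abs{Q}$, $\abs{\Domain}$ and $\abs{\delta}$; multiplying the per-iteration cost by the at most $\abs{Q}^2$ iterations yields an overall polynomial bound. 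I expect the only genuinely delicate points to be the monotonicity check for the cross-component enabling clause — one must confirm that growing the input never disables a previously available send — and the correct seeding that distinguishes the reflexive closure $\Pre^*_C$ from the strict closure $\Post^+_C$. Once these are settled, the height bound and the per-step bound are routine.
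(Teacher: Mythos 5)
Your proposal is correct and follows essentially the same route as the paper: the paper's proof is exactly this saturation, maintaining sets $R_1,\dots,R_m$, applying the defining equations with $R_i, R_\ell$ substituted for $X_i, X_\ell$ until stabilization, with the same termination bound of at most $\abs{Q}^2$ updates (each strict step adds a state to one of at most $\abs{Q}$ subsets of $Q$), a per-update cost of scanning $\delta$ for matching send/receive pairs, and $\Pre^*_C$ handled by reversing transitions. One caveat on your seeding: the paper initializes $R_i = X_i$ even for $\Post^+_C$, so the seed states are always retained in the computed closure; this matters downstream in Lemma~\ref{Lemma:Equation}, where the states $s_1,\dots,s_m$ persist along a monotonically increasing path in $G$ without necessarily being regenerated by any transition, so your strict seeding at $\Post_C(X_1,\dots,X_m)$ could wrongly omit them from the intersection and make the fixed-point test fail on genuine cycles --- the superscript $+$ refers to the path having length at least one, not to every state being produced by at least one step.
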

% ----------------------------------------------------------------------------------------------
% ----------------------------------------------------------------------------------------------
% ----------------------------------------------------------------------------------------------
% ----------------------------------------------------------------------------------------------
\begin{proof}
	Both closures can be computed by a saturation.
	For $\Post^+_C(X_1, \dots, X_m)$, we keep $m$ sets $R_1, \dots, R_m$, each being the post of a component.
	Initially, we set $R_i = X_i$.
	The defining equation of $X'_i$ in $\Post^+_C(X_1, \dots, X_m)$ gives the saturation.
	One just needs to substitute $X_i$ by $R_i$ and $X_\ell$ by $R_\ell$ on the right hand side.
	The resulting set of states is added to $R_i$.
	This process is applied consecutively to each component and then repeated until the sets $R_i$ do not change anymore, the fixed point is reached.
	
	The saturation terminates in polynomial time.
	After updating $R_i$ in each component, we either already terminated or added at least one new state to a set $R_i$.
	Since there are $m \leq \abs{Q}$ of these sets and each one is a subset of $Q$, we need to update the sets $R_i$ at most $\abs{Q}^2$ many times.
	For a single of these updates, the dominant time factor comes from finding appropriate send and receive transitions.
	This can be achieved in $\mathcal{O}(\abs{\delta}^2)$ time.
	
	Computing the closure $\Pre^*_C(X_1,\dots,X_m)$ is similar.
	One can apply the above saturation and only needs to reverse the transitions in the client.
	\qed
\end{proof}
% ----------------------------------------------------------------------------------------------
% ----------------------------------------------------------------------------------------------
% ----------------------------------------------------------------------------------------------
% ----------------------------------------------------------------------------------------------

As argued above, the existence of a cycle reduces to finding a fixed point.
The following lemma shows that it can be computed efficiently.
% ----------------------------------------------------------------------------------------------
% ----------------------------------------------------------------------------------------------
% ----------------------------------------------------------------------------------------------
% ----------------------------------------------------------------------------------------------
\begin{lemma}
	\label{Lemma:Equation}
	There is a cycle $(\{ s_1 \}, \dots, \{ s_m\}) \rightarrow^+_G (\{ s_1 \}, \dots, \{ s_m\})$ if and only if there is a non-trivial solution to the equation
	\begin{align*}
		C = \Post^+_C (\{ s_1 \}, \dots, \{ s_m\}) \cap \Pre^*_C (\{ s_1 \}, \dots, \{ s_m\})\; .
	\end{align*}
	 Such a solution can be found in polynomial time.
\end{lemma}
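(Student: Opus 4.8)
The plan is to read the equation through the normal-form characterization of Lemma~\ref{Lemma:Normalform}: a non-trivial fixed point $C$ will play exactly the role of the top vertex $V_{\max}$ of a normal-form cycle. First I would record a simplification. A solution is \emph{non-trivial} precisely when it contains the start, $(\setcon{s_1}, \dots, \setcon{s_m}) \sqsubseteq C$, i.e. $s_i \in C_i$ for every $i$. Because both $\Post_C$ and $\Pre_C$ only fire transitions of the restricted clients $\bcprot\!\downarrow_{C_i}$, their iterates never leave $C$; hence $\Post^+_C(\setcon{s_1}, \dots, \setcon{s_m}) \subseteq C$ and $\Pre^*_C(\setcon{s_1}, \dots, \setcon{s_m}) \subseteq C$. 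The equation $C = \Post^+_C \cap \Pre^*_C$ then forces $\Post^+_C(\setcon{s_1}, \dots, \setcon{s_m}) = C = \Pre^*_C(\setcon{s_1}, \dots, \setcon{s_m})$, so it suffices to characterise when both closures equal $C$.

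For the direction from a cycle to a solution, I would apply Lemma~\ref{Lemma:Normalform} to the given cycle to obtain a normal-form cycle $(\setcon{s_1}, \dots, \setcon{s_m}) \rightarrow^*_G V_{\max} \rightarrow^*_G (\setcon{s_1}, \dots, \setcon{s_m})$ whose prefix is increasing and whose suffix is decreasing, and set $C = V_{\max}$, which is non-trivial since $V_{\max}$ dominates the start vertex. To see $\Post^+_C = C$, I would induct along the increasing prefix: every state appearing there is a send- or receive-successor of previously reached states, and in the receive case the enabling send $s_1 \xrightarrow{!a} s_2$ has both endpoints already reached ($s_2$ being a send-successor of $s_1$), so the $\Post_C$-saturation recovers all of $V_{\max}$; the strict ``$+$'' comes from the cycle being genuine. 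Symmetrically, reading the decreasing suffix forward as broadcast transitions, each deleted state's clients move into a surviving state, which ultimately is a start state, so every state of $V_{\max}$ reaches the start within $C$ and $\Pre^*_C = C$. Hence $C$ solves the equation non-trivially.

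For the converse I would turn a non-trivial fixed point $C$, which satisfies $\Post^+_C = C = \Pre^*_C$, into a cycle by gluing two half-paths. Using $\Post^+_C = C$, I build an increasing path $(\setcon{s_1}, \dots, \setcon{s_m}) \rightarrow^+_G C$ that adds the states of $C$ one at a time with empty Kill sets, each addition being a legal edge because the $\Post_C$-definition supplies the sender required by condition~(1) of the edge relation. Using $\Pre^*_C = C$, I build a decreasing path $C \rightarrow^*_G (\setcon{s_1}, \dots, \setcon{s_m})$ that deletes every non-start state, processing states in an order that respects their forward distance to the start, so that when a state is removed its clients can be absorbed by a still-present successor. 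Concatenation yields the desired cycle (non-empty, as $\Post^+$ is strict). I expect this decreasing path to be the main obstacle: one must fix a deletion order and argue, exactly as in the decreasing phase of Lemma~\ref{Lemma:Normalform}, that the absorbing successor and the send enabling a receiving deletion are both still present at the moment of deletion; getting this ordering right is the crux.

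Finally, for the polynomial-time claim I would compute the greatest solution by a decreasing Kleene iteration. The operator $F(C) = \Post^+_C(\setcon{s_1}, \dots, \setcon{s_m}) \cap \Pre^*_C(\setcon{s_1}, \dots, \setcon{s_m})$ is monotone in $C$, since enlarging $C$ only enables more transitions and thus enlarges both closures, so iterating $F$ from $(Q, \dots, Q)$ yields a decreasing chain that stabilises at the greatest fixed point $C^\ast$ after at most $\abs{Q}^2$ steps, because each non-final step drops at least one state from some component. Each step is polynomial by Lemma~\ref{Lemma:PolyPostPre}. A non-trivial solution exists if and only if $C^\ast$ is non-trivial: any non-trivial fixed point is contained in $C^\ast$ and therefore keeps $C^\ast$ non-trivial, while $C^\ast$ itself is such a solution when non-trivial. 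Hence the algorithm runs $F$ to its fixed point and checks whether $s_i \in C^\ast_i$ for all $i$.
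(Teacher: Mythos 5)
Your proposal is correct and takes essentially the same route as the paper's proof: use Lemma~\ref{Lemma:Normalform} to extract the apex of a normal-form cycle as the solution $C$, rebuild the cycle from the two closures in the converse direction, and compute the greatest fixed point by a Kleene iteration (quadratically many steps, each polynomial by Lemma~\ref{Lemma:PolyPostPre}). You work out several details the paper leaves implicit (the forced equalities $\Post^+_C = C = \Pre^*_C$, the deletion ordering in the decreasing half, and the monotonicity argument showing the greatest fixed point witnesses any non-trivial solution), but the underlying argument is the same.
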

% ----------------------------------------------------------------------------------------------
% ----------------------------------------------------------------------------------------------
% ----------------------------------------------------------------------------------------------
% ----------------------------------------------------------------------------------------------
\begin{proof}
	We use a Kleene iteration to compute the greatest fixed point.
	It invokes Lemma~\ref{Lemma:PolyPostPre} as a subroutine.
	Every step of the Kleene iteration reduces the number of states in $C$ by at least one, and initially there are at most $\abs{Q}$ entries with $\abs{Q}$ states each.
	Hence, we \mbox{terminate after quadratically many iterations.}
	
	It is left to prove correctness. Let \mbox{$(\setcon{s_1}, \dots, \setcon{s_m}) \rightarrow^+_G (\setcon{s_1}, \dots, \setcon{s_m})$} be a cycle in $G$.
	By Lemma \ref{Lemma:Normalform} we can assume it to be in normal form.
	Let $(\setcon{s_1}, \dots, \setcon{s_m}) \rightarrow^+_G C$ be the increasing part and \mbox{$C \rightarrow^*_G (\setcon{s_1}, \dots, \setcon{s_m})$} the decreasing part.
	Then, $C$ is a solution to the equation.
	
	For the other direction, let a solution $C$ be given.
	Since $C$ is contained in $\Post^+_C (\setcon{s_1}, \dots, \setcon{s_m})$ we can construct a monotonically increasing path $(\setcon{s_1}, \dots, \setcon{s_m}) \rightarrow^+_G C$.
	Similarly, since \mbox{$C \subseteq \Pre^*_C (\setcon{s_1}, \dots, \setcon{s_m})$}, we get a  decreasing path $C \rightarrow^*_G (\setcon{s_1}, \dots, \setcon{s_m})$.
	Hence, we get the desired cycle.
	\qed
\end{proof}
% ----------------------------------------------------------------------------------------------
% ----------------------------------------------------------------------------------------------
% ----------------------------------------------------------------------------------------------
% ----------------------------------------------------------------------------------------------

What is yet open is the question on which states $s_1$ to $s_m$ to perform the search for a cycle. 
After all, we need that the corresponding configuration is reachable from an initial configuration. 
The idea is to use the set of all states reachable from an initial state in the client.
Note that there is a live computation if and only if there is a live computation involving all those states. 
Indeed, if a state is not active during the cycle, the corresponding clients will stop moving after an initial set-up phase.
Since the states reachable from an initial state can be computed in polynomial time \cite{Delzanno2012}, \mbox{the proof of Theorem \ref{Theorem:LVPpcomplete} follows.}

\emph{Liveness Verification} does not take fairness into account. 
A client may contribute to the live computation (and help the distinguished client reach a final state) without ever making progress towards its own final state.

	\section{Fair Liveness}
\label{Section:FairLiveness}

We study the fair liveness verification problem that strengthens the requirement on the computation.
Given a broadcast network $\bcnet = (D, \bcprot)$ with clients \mbox{$\bcprot = (Q, I, \delta)$} and a set of final states $F \subseteq Q$, the problem asks whether there is an infinite initialized computation $\pi$ in which clients that send or receive infinitely often also visit their final states infinitely often, $\Inf{\pi}\subseteq \Fin{\pi}$.
Computations satisfying the requirement are called \emph{fair}, a notion that dates back to~\cite{PS08} where it was introduced as compassion or strong fairness.
% ----------------------------------------------------------------------------------------------
% ----------------------------------------------------------------------------------------------
% ----------------------------------------------------------------------------------------------
% ----------------------------------------------------------------------------------------------
\begin{myproblem}
	\problemtitle{Fair Liveness Verification}
	\probleminput{A broadcast network $\bcnet = (\Domain,\bcprot)$ and final states $F \subseteq Q$.}
	\problemquestion{Is there an initialized computation $\pi$ with $\Inf{\pi}\subseteq \Fin{\pi}$?}
\end{myproblem}
% ----------------------------------------------------------------------------------------------
% ----------------------------------------------------------------------------------------------
% ----------------------------------------------------------------------------------------------
% ----------------------------------------------------------------------------------------------

We solve the problem by applying the cycle finding algorithm from Section~\ref{Section:Liveness} to an instrumentation of the given broadcast network.
Formally, given an instance $(\bcnet, F)$ of \emph{Fair Liveness Verification}, we construct a new broadcast network $\bcnet_F$, containing several copies of $Q$.
Recall that $Q$ is the set of client states in $\bcnet$.
The construction ensures that cycles over $Q$ in $\bcnet_F$ correspond to cycles in $\bcnet$ where each participating client sees a final state.
Such cycles make up a fair computation.
The main result is the following.
% --------------------------------------------------------------------------------
% --------------------------------------------------------------------------------
% --------------------------------------------------------------------------------
% --------------------------------------------------------------------------------
\begin{theorem}\label{Theorem:Instrumentation}
	Fair Liveness Verification is $\P$-complete.
\end{theorem}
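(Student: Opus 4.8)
The plan is to reduce fair liveness on $(\bcnet, F)$ to the cycle-finding machinery of Section~\ref{Section:Liveness} applied to an instrumented network $\bcnet_F$, and then to invoke Theorem~\ref{Theorem:LVPpcomplete}. First I would establish a fair analogue of the splitting Lemma~\ref{Lemma:Splitting}: there is an initialized $\pi$ with $\Inf{\pi} \subseteq \Fin{\pi}$ if and only if there is a finite computation $c_0 \rightarrow^* c \rightarrow^+ c$ whose loop $c \rightarrow^+ c$ is \emph{fair}, meaning every client that moves during the loop visits $F$ inside the loop while idle clients do not move. The ``if'' direction iterates the loop, so that the loop-movers are exactly $\Inf{\pi}$ and each of them lies in $\Fin{\pi}$. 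For ``only if'', I would pass to the tail of $\pi$ after the finitely many clients outside $\Inf{\pi}$ have stopped moving forever; on this tail only $\Inf{\pi}$-clients move and each visits $F$ infinitely often. Since the configurations live in a fixed $Q^k$, some configuration $c$ recurs infinitely, and I can pick two occurrences $t_1 < t_2$ of $c$ far enough apart that every client in $\Inf{\pi}$ both moves and sees $F$ in between; the loop $c_{t_1} \rightarrow^+ c_{t_2}$ is then fair.

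Next I would build $\bcnet_F$ so that fair loops correspond to ordinary loops over a distinguished copy of $Q$. The state space consists of three copies of $Q$: a \emph{boundary} copy, and two \emph{active} copies recording, for a client that has started a loop, whether it has \emph{not yet} or \emph{already} seen a final state. Every transition of $\bcprot$ is replicated inside the active copies, preserving the message so that broadcasts still synchronize across copies; a boundary client that performs any operation moves into the active not-yet-seen copy, switching to the seen copy exactly when it enters a state of $F$; and the only transitions back to the boundary copy originate in the seen copy. Consequently, any loop of $\bcnet_F$ over a configuration whose states all lie in the boundary copy is fair: a mover must leave the boundary, and to return it must pass through the seen copy, hence visit $F$; conversely a fair loop of $\bcnet$ lifts by routing each mover from the boundary through the not-yet-seen and seen copies and back. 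The one edge case is a mover whose only $F$-visit is the boundary state itself (when $c[i]\in F$), which I handle by additionally letting a boundary client in a final state enter the seen copy immediately.

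It remains to handle reachability and to run the algorithm. For the prefix I would use the set $R$ of states reachable from $I$ in $\bcprot$, computable in polynomial time~\cite{Delzanno2012}: since receives may always be ignored, any configuration over $R$ is reachable from $\IConf$ by letting clients walk to their targets one at a time. I would therefore search, via Lemmas~\ref{Lemma:Cycle}, \ref{Lemma:Normalform} and \ref{Lemma:Equation}, for a loop of $\bcnet_F$ over the boundary copies of all states in $R$; clients parked at unused states simply stay idle at the boundary and do not affect fairness. Projecting such a loop to $\bcnet$ yields a reachable fair loop, and combined with the fair splitting lemma this decides fair liveness. Since $\bcnet_F$ has size linear in $\bcprot$, Lemma~\ref{Lemma:Equation} runs in polynomial time, giving membership in $\P$.

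The main obstacle I expect is twofold, and both parts concern fairness rather than the polynomial bound, which comes for free from Section~\ref{Section:Liveness}. First, the fair splitting lemma must fold the infinite computation into a loop that is itself fair; ordinary pigeonholing on a recurring configuration is not enough, since I must guarantee that the loop-movers coincide with $\Inf{\pi}$ and that each of them sees $F$ strictly inside the loop. Second, the instrumentation must be simultaneously sound (every boundary loop is fair) and complete (every fair loop lifts), which forces the careful not-yet-seen/seen bookkeeping together with the boundary-final-state special case above.
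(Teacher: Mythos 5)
Your proposal is correct and follows essentially the same route as the paper: your fair splitting lemma is the paper's Lemma~\ref{Lemma:NewSplitting} (good cycles $c \Rightarrow_F c$), your boundary/not-yet-seen/seen copies are exactly the paper's three phases $Q$, $\hat{Q}$, $\tilde{Q}$ with correctness as in Lemma~\ref{Lemma:SoundComplete}, and the algorithm --- reachable states plus the fixed-point iteration of Lemma~\ref{Lemma:Equation} over the boundary copies --- is the paper's. The only cosmetic difference is that the paper realizes the phase switches $(\hat{q}, \send{n}, \tilde{q})$ for $q \in F$ and $(\tilde{q}, \send{n}, q)$ as sends of a fresh message $n$ (a detail you leave implicit, since every transition must carry a send or receive label), and this also subsumes your boundary-final-state edge case: a mover ending back at $\hat{c[i]}$ with $c[i] \in F$ simply exits through $\tilde{Q}$, so no special rule is needed.
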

% --------------------------------------------------------------------------------
% --------------------------------------------------------------------------------
% --------------------------------------------------------------------------------
% --------------------------------------------------------------------------------

Hardness follows from \cite{Delzanno2012}.
To explain the aforementioned instrumentation, we need the notion of a good computation, where good means fairness is respected.
A computation \mbox{$c_1 \rightarrow c_2 \rightarrow \dots \rightarrow c_n$} with $n > 1$ is called \emph{good for $F$}, denoted $c_1 \Rightarrow_F c_n$, if every client that makes a move during the computation also sees a final state.
Formally, if \mbox{$i \in \type(c_1 \rightarrow^+ c_n)$} then there exists a $k \in [1..n]$ such that $c_k[i] \in F$. 
The following strengthens Lemma~\ref{Lemma:Splitting}.
% --------------------------------------------------------------------------------
% --------------------------------------------------------------------------------
% --------------------------------------------------------------------------------
% --------------------------------------------------------------------------------
\begin{lemma}\label{Lemma:NewSplitting}
	There is a fair computation from $c_0$ if and only if $c_0\rightarrow^*c\Rightarrow_F c$.
\end{lemma}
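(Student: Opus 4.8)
The plan is to transport the pigeonhole argument behind Lemma~\ref{Lemma:Splitting} through the fairness constraint, turning an infinite fair computation into a finite good cycle and back. For the direction from right to left, I would start from $c_0 \rightarrow^* c \Rightarrow_F c$ and build the infinite computation $\pi$ by prepending the finite prefix $c_0 \rightarrow^* c$ to infinitely many copies of the good cycle $c \Rightarrow_F c$. The only thing to check is $\Inf{\pi} \subseteq \Fin{\pi}$. Since the prefix is finite, any client $i \in \Inf{\pi}$ must take a move inside the cycle $c \Rightarrow_F c$; goodness of the cycle then forces $i$ to visit a state in $F$ during that cycle, and since the cycle is repeated verbatim ad infinitum, $i$ visits $F$ infinitely often, so $i \in \Fin{\pi}$.

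For the converse, let $\pi = c_0 \rightarrow c_1 \rightarrow \cdots$ be fair and put $N = \Inf{\pi}$. As there are only finitely many clients, I would first pick an index $T$ beyond the last move of every client outside $N$; past $c_T$ only clients in $N$ ever move. All configurations of $\pi$ lie in the finite set $Q^k$ (the arity is fixed along $\pi$), so by the pigeon hole principle some configuration $c$ recurs infinitely often among $c_T, c_{T+1}, \dots$; let $T \le t_0 < t_1 < \cdots$ enumerate its occurrences. Each segment $c_{t_0} \rightarrow^+ c_{t_n}$ is already a cycle on $c$, but need not be good. To enforce goodness I would exploit $N \subseteq \Fin{\pi}$: for every $i \in N$ choose the first position exceeding $t_0$ at which $i$ sees a final state, and pick $n$ so large that $t_n$ exceeds all these finitely many positions. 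Then in the segment $c_{t_0} \rightarrow^+ c_{t_n}$ every client of $N$ visits $F$; since the moving clients of this segment form a subset of $N$, every moving client visits $F$, which is exactly $c \Rightarrow_F c$. Finally $c_0 \rightarrow^* c$ holds via the prefix $c_0 \rightarrow \cdots \rightarrow c_{t_0}$.

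The main obstacle is this converse direction, and specifically the need to align two windows that the choice of $T$ and $n$ must simultaneously respect: beyond $T$ the clients outside $N$ are frozen, so they never count as moving clients of the extracted cycle, while the enlargement to $t_n$ guarantees that the remaining potential movers---those in $N$---each witness a final state inside that very same segment. The delicate bookkeeping step is verifying that freezing the non-$N$ clients does not interfere with the recurrence of $c$, so that the returning segment and the $F$-visiting segment can be taken to coincide.
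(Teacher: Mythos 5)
Your proof is correct and takes exactly the approach the paper intends: it is the pigeonhole argument behind Lemma~\ref{Lemma:Splitting}, strengthened by first passing a threshold $T$ after which only clients of $\Inf{\pi}$ move, and then stretching the extracted cycle on the recurring configuration $c$ until every client of $\Inf{\pi}$ (hence every mover, using fairness $\Inf{\pi}\subseteq\Fin{\pi}$) has witnessed $F$ inside the segment, which is precisely $c \Rightarrow_F c$. The ``delicate bookkeeping'' you flag at the end is a non-issue: the good cycle is cut out of the given computation $\pi$ itself, with $T$ fixed before the pigeonhole is applied to the suffix $c_T, c_{T+1},\dots$, so nothing is modified and the frozen clients cannot interfere with the recurrence of $c$.
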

% ----------------------------------------------------------------------------------------------
% ----------------------------------------------------------------------------------------------
% ----------------------------------------------------------------------------------------------
% ----------------------------------------------------------------------------------------------
\begin{proof}
	If there is a computation of the form $c_0 \rightarrow^* c \Rightarrow_F c$, then the good cycle $c \Rightarrow_F c$ can be iterated infinitely often to obtain a fair computation. 	
	
	For the other direction, let a fair computation $\pi$ starting in $c_0$ be given.
	Since the configurations visited by $\pi$ are over $Q^k$ for some $k$, there is a configuration $c$ that repeats infinitely often in $\pi$.
	Hence, we obtain a prefix $c_0 \rightarrow^* c$.
	
	Let $\Inf{\pi} = \setcon{i_1,\dots, i_n}$ be the clients that participate infinitely often in computation $\pi$.
	By the definition of a fair computation, we have \mbox{$\Inf{\pi} \subseteq \Fin{\pi}$.}
	This means that each of the clients in $\setcon{i_1,\dots, i_n}$ visits a state from $F$ infinitely often along $\pi$.
	Hence, for each $j \in [1..n]$ we can find a subcomputation \mbox{$\pi_j = c \rightarrow^+ c$} of $\pi$, in which client $i_j$ visits a state from $F$.
	Combining all $\pi_j$ yields desired good cycle $c \Rightarrow_F c$.
	\qed
\end{proof}
% --------------------------------------------------------------------------------
% --------------------------------------------------------------------------------
% --------------------------------------------------------------------------------
% --------------------------------------------------------------------------------

The broadcast network $\bcnet_F$ is designed to detect good cycles $c \Rightarrow_F c$.
The idea is to let the clients compute in phases.
The original state space $Q$ is the first phase.
As soon as a client participates in the computation, it moves to a second phase given by a copy $\hat{Q}$ of $Q$. 
From this copy it enters a third phase $\tilde{Q}$ upon seeing a final state. 
From $\tilde{Q}$ it may return to $Q$. 

Let the given broadcast network be $\bcnet=(D, P)$ with $P=(Q, I, \delta)$.
We define $\bcnet_F = (D\cup\setcon{n}, P_F)$ with fresh symbol $n\notin D$ and extended client
\begin{align*}
	P_F = (\bar{Q}, \tilde I, \bar{\delta})\quad\text{where}\quad
	\bar{Q}=Q\cup \hat{Q} \cup \tilde{Q}.
\end{align*}
Note that the initial phase is $\tilde{Q}$.
For every transition $(q, \op, q') \in \delta$, we have $(q, \op, \hat{q}'), (\hat{q}, \op, \hat{q}'), (\tilde{q}, \op, \tilde{q}')\in \bar{\delta}$. 
For every final state $q \in F$ we have \mbox{$(\hat{q},\send{n},\tilde{q})\in \bar{\delta}$.}
Finally, for every state $q\in Q$ we have $(\tilde{q}, \send{n}, q)\in \bar{\delta}$.
Configuration $c$ admits a good cycle if and only if there is a cycle at $c$ in the instrumented network.
An initial prefix can be mimicked by computations within~$\tilde{Q}$.
% --------------------------------------------------------------------------------
% --------------------------------------------------------------------------------
% --------------------------------------------------------------------------------
% --------------------------------------------------------------------------------
\begin{lemma}\label{Lemma:SoundComplete}
	$c_0\rightarrow^*c\Rightarrow_F c$ in $\bcnet$ if and only if 
	$\tilde c_0\rightarrow^*c\rightarrow^+ c$ in $\bcnet_F$. 
\end{lemma}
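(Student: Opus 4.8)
The plan is to prove both implications by translating computations between $\bcnet$ and $\bcnet_F$ along the correspondence of states. For the direction from $\bcnet_F$ to $\bcnet$ I would use the projection $\mathit{pr}\colon \bar{Q}\to Q$ that forgets the phase, $\mathit{pr}(q)=\mathit{pr}(\hat q)=\mathit{pr}(\tilde q)=q$, extended pointwise to configurations. For the direction from $\bcnet$ to $\bcnet_F$ I would re-label an existing computation so that each client fires the copy of its transition appropriate to its current phase, inserting the sends of the fresh symbol $n$ that switch phases at the right moments.

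\emph{Backward direction.} Given $\tilde c_0\rightarrow^* c\rightarrow^+ c$ in $\bcnet_F$, I apply $\mathit{pr}$ to every configuration. A broadcast on a message $a\in\Domain$ projects to a broadcast on $a$ in $\bcnet$, since all three copies of a transition $(q,a,q')$ carry the same label; a send of $n$ has no receiver and leaves the projected state unchanged, so it projects to a stuttering step that I delete. This yields $c_0\rightarrow^* c\rightarrow^* c$ in $\bcnet$ with $c_0=\mathit{pr}(\tilde c_0)\in\IConf$ and $\mathit{pr}(c)=c$. Two points remain. First, the projected cycle is non-empty: from the configuration $c$ over $Q$ the only enabled moves are broadcasts on some $a\in\Domain$ (states in $Q$ have no $n$-send), so at least one such transition occurs and survives the projection. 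Second, goodness: a client that moves in the projected cycle takes some $a$-transition, which from $Q$ forces it into $\hat Q$; since it must return to $Q$ by the end of the cycle and the phase of a client can only change along $Q\to\hat Q\to\tilde Q\to Q$, it must fire $(\hat q,\send{n},\tilde q)$ for some $q\in F$, hence visit a final state of $\bcnet$. Therefore $c\Rightarrow_F c$.

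\emph{Forward direction.} Given $c_0\rightarrow^* c\Rightarrow_F c$, I first realize the prefix entirely in the third phase: the transitions $(\tilde q,a,\tilde q')$ mirror $\delta$, so $\tilde c_0\rightarrow^*\tilde c$ where $\tilde c$ is the phase-$\tilde Q$ copy of $c$, and then the sends $(\tilde q_i,\send{n},q_i)$ lower the clients one by one to $c$, giving $\tilde c_0\rightarrow^* c$. For the cycle I re-label the good computation $c=d_0\rightarrow\dots\rightarrow d_K=c$: each original broadcast on $a$ is mimicked by a broadcast on $a$ in which every participant fires the copy of its transition matching its current phase; a client's first move carries it from $Q$ into $\hat Q$, and subsequent moves keep it in its phase. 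For each client $i$ that moves, goodness provides a step with $d_k[i]\in F$; if this happens before $i$'s first move then $d_k[i]=c[i]$, so $c[i]\in F$ and the same client also sits on a final state at the terminal configuration $d_K=c$. Hence $i$ is at a final state in phase $\hat Q$ at some step at or after its first move, and there I insert the send $(\hat q,\send{n},\tilde q)$ to push it into $\tilde Q$; a closing round of sends $(\tilde q_i,\send{n},q_i)$ returns every moved client to $Q$. The result is a cycle $c\rightarrow^+ c$ in $\bcnet_F$.

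The hard part will be the scheduling of the phase switches in the forward direction, in particular the case where a client's only guaranteed final-state visit occurs before it starts moving. The observation that this forces $c[i]\in F$, so that the terminal configuration of the cycle supplies a usable final-state visit at or after the first move, is what makes the re-labeling go through. I would also have to confirm that interleaving the $n$-sends with the broadcasts never disturbs the simulated transitions; this is immediate because sends of $n$ have no matching receive transition, so they never synchronize with another client.
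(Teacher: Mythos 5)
Your proof is correct and follows essentially the same route as the paper, which only sketches this argument: simulate the prefix in the third phase $\tilde{Q}$, re-label the good cycle phase-wise with non-synchronizing $n$-sends for the phase switches, and for the converse observe that a client starting in $Q$ must pass through $\hat{Q}$ and hence through a final state to return to $Q$. Your treatment of the corner case where a client's only final-state visit precedes its first move (forcing $c[i]\in F$, so the terminal configuration supplies the needed visit) correctly fills in a detail the paper's sketch leaves implicit.
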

% --------------------------------------------------------------------------------
% --------------------------------------------------------------------------------
% --------------------------------------------------------------------------------
% --------------------------------------------------------------------------------

In the proof, we argue that a good cycle entails a cycle in the instrumented broadcast network.
The reasoning for a prefix is simpler.
For the reverse direction, note that in $c$ all clients are in states from $Q$.
As soon as a client participates in the computation, it will move to $\hat{Q}$.
To return to $Q$, the client will have to see a final state.
This makes the computation good.
% --------------------------------------------------------------------------------
% --------------------------------------------------------------------------------
% --------------------------------------------------------------------------------
% --------------------------------------------------------------------------------
\begin{proof}
	First note that by construction, a prefix $c_0 \rightarrow^* c$ exists in $\bcnet$ if and only if $\tilde{c}_0 \rightarrow^* \tilde{c}$ exists in $\bcnet_F$.
	Since one can always move from configuration $\tilde{c}$ to $c$ in $\bcnet_F$, we obtain that $c_0 \rightarrow^* c$ in $\bcnet$ if and only if $\tilde{c}_0 \rightarrow^* c$ in $\bcnet_F$.
	
	It is left to show that there is a good cycle $c \Rightarrow_F c$ if and only if there is a cycle $c \rightarrow^+ c$ in $\bcnet_F$.
	We first show that good cycles entail cycles in $\bcnet_F$.
	To this end, we prove a slightly more general result:
	if there is a computation $\sigma = c \xrightarrow{w}_\bcnet d$, we can derive a computation $c \xrightarrow{w'}_{\bcnet_F} d(\sigma)$ where $w'$ is a word over $\Domain \cup \setcon{n}$ such that its projection to $\Domain$ is $w$.
	The configuration $d(\sigma)$ is defined componentwise.
	Assume $d \in Q^k$ and let $i \in [1..k]$.
	Moreover, let $d[i] = q$, then the $i$-th component of $d(\sigma)$ is defined via a case distinction:
	
	(1) $d(\sigma)[i] = q$ if $\Trans_i(\sigma) = \varepsilon$.
	So if client $i$ does not contribute to $\sigma$, it will not change its state. Note that $c[i] = q$ in this case.
	
	(2) $d(\sigma)[i] = \tilde{q}$ if $\Trans_i(\sigma) = c[i] \rightarrow^+ e[i] \rightarrow^* d[i] = q$ with $e[i] \in F$.
	Here, $e$ is some intermediary configuration visited along $\sigma$.
	Hence, client $i$ visits a final state during the computation, leading it into $\tilde{Q}$ in the instrumented network.
	
	(3) $d(\sigma)[i] = \hat{q}$ otherwise.
	In this case, client $i$ contributes to computation $\sigma$ but does not visit a final state along it.

	We construct the computation $c \xrightarrow{w'}_{\bcnet_F} d(\sigma)$ by induction.
	In the case \mbox{$w = \varepsilon$}, we set $w' = \varepsilon$ and obtain $d(\sigma) = d$.
	Hence, we get the desired (empty) computation in the network $\bcnet_F$.
	
	Now let $w = u . a$ for a word $u \in \Domain^*$ and $a \in \Domain$.
	Then the computation $\sigma = c \xrightarrow{w}_\bcnet d$ splits into two parts.
	There is an intermediary configuration $f$ with $\sigma = \sigma_1 . \sigma_2$ where $\sigma_1 = c \xrightarrow{u}_\bcnet f$ and $\sigma_2 = f \xrightarrow{a}_\bcnet d$.
	We apply the induction to $\sigma_1$ and obtain a computation $\sigma'_1 = c \xrightarrow{u'}_{\bcnet_F} f(\sigma_1)$.
	Let $\J = \type(\sigma_2)$ be the clients contributing to transition $\sigma_2$.
	We extend the computation $\sigma'_1$ for each $j \in \J$ as follows.
	Let $f[j] = q \in Q$ and $d[j] = p \in Q$ and let $\Trans_j(\sigma_2) = q \xrightarrow{\op} p$ with $\op \in \Ops{\Domain}$.
	
	(1) If $\Trans_j(\sigma_1) = \varepsilon$, then by definition $f(\sigma_1)[j] = q$.
	Hence, by construction of $\bcnet_F$, there is a transition $q \xrightarrow{\op} \hat{p}$ which we perform in the $j$-th component.
	
	(2) If $\Trans_j(\sigma_1)$ visits an intermediary final state of $F$, then $f(\sigma_1)[j] = \tilde{q}$.
	In this case, we extend $\sigma'_1$ by the transition $\tilde{q} \xrightarrow{\op} \tilde{p}$ in the $j$-th component.
	
	(3) If $\Trans_j(\sigma_1) \neq \varepsilon$ and does not visit a final state, we get that $f(\sigma_1)[j] = \hat{q}$.
	Then we extend $\sigma_1$ by the transition $\hat{q} \xrightarrow{\op} \hat{p}$ in the $j$-th component.
	
	Clients $j \notin \type(\sigma_2)$ do not change their current state during the extension of $\sigma'_1$.
	Altogether, we obtain a computation of the form $c \xrightarrow{u'}_{\bcnet_F} f(\sigma_1) \xrightarrow{a}_{\bcnet_F} g$, where $g$ is the resulting configuration of the above extension.
	We extend the computation such that it reaches $d(\sigma)$.
	To this end, let $j \in [1..k]$.
	If $g[j] = \hat{p}$ and $p \in F$, we add the transition $\hat{p} \xrightarrow{\send{n}} \tilde{p}$ without any receiving clients in $\bcnet_F$.
	By this, we get that all clients visiting a final state along $\sigma$, finally move to phase $\tilde{Q}$.
	Hence, we obtain the following desired computation:
	\begin{align*}
		c \xrightarrow{u'}_{\bcnet_F} f(\sigma_1) \xrightarrow{a}_{\bcnet_F} g \xrightarrow{n}_{\bcnet_F} \cdots \xrightarrow{n}_{\bcnet_F} d(\sigma).
	\end{align*}
	
	Now assume a good cycle $\pi = c \Rightarrow_F c$ in $\bcnet$ is given.
	We apply the above result and obtain a computation $c \rightarrow^+_{\bcnet_F} c(\pi)$.
	Since $\pi$ is good for $F$, each participating client visits a final state along $\pi$.
	Hence, the configuration $c(\pi)$ can be described as follows.
	Let $j \in [1..k]$ and $c[j] = q$.
	We have
	\begin{align*}
		c(\pi)[j] = \left\lbrace
		\begin{aligned}
			q,& ~ \text{if} ~ \Trans_j(\pi) = \varepsilon, \\
			\tilde{q},& ~ \text{otherwise}.
		\end{aligned}
		\right.
	\end{align*}
	Note that there is not state in phase $\hat{Q}$.
	
	We extend the computation $c \rightarrow^+_{\bcnet_F} c(\pi)$ for each $j \in [1..k]$ with $\Trans_j(\pi) \neq \varepsilon$.
	Since $c(\pi)[j] = \tilde{q}$ in this case, we add the transition $\tilde{q} \xrightarrow{\send{n}} q = c[j]$.
	Hence, we obtain a cycle
	\begin{align*}
		c \rightarrow^+_{\bcnet_F} c(\pi) \xrightarrow{\send{n}}_{\bcnet_F} \cdots \xrightarrow{\send{n}}_{\bcnet_F} c.
	\end{align*}
	
	For the other direction, assume that a cycle $c \rightarrow^+_{\bcnet_F} c$ in $\bcnet_F$ is given.
	By construction, such a cycle can be decomposed as follows:
	\begin{align*}
		c_0 = c \xrightarrow{w_1} d_1 
		\xrightarrow{n}^*c_1 
		\xrightarrow{w_2} d_2 
		\xrightarrow{n}^* c_2 
		\rightarrow \cdots \rightarrow c_{n-1}
		\xrightarrow{w_n} d_n
		\xrightarrow{n}^* c.
	\end{align*}
	Here, each $w_i \in \Domain^*$ and the computation between $d_i$ and $c_i$ is a repeated sending of $n$ without receivers, needed to transition between different phases.
	
	Out of this computation, we construct a good cycle $c \Rightarrow_F c$.
	To this end, we define for each configuration $d$ of $\bcnet_F$ with $k$ components a configuration $\bcnet(d)$ that retrieves the original states out of $d$.
	For each $j \in [1..k]$, we set
	\begin{align*}
		\bcnet(d)[j] = q ~\text{if}~ d[j] = q / \tilde{q} / \hat{q}.
	\end{align*}

	By construction of $\bcnet_F$, we can mimic the transitions occurring in $c_{i-1} \xrightarrow{w_i}_{\bcnet_F} d_i$ by transitions on $Q$ and hence infer a corresponding computation $\bcnet(c_{i-1}) \xrightarrow{w_i}_{\bcnet} \bcnet(d_i)$ on the original network.
	A computation $d_i \xrightarrow{n}^*_{\bcnet_F} c_i$ can be ignored on the original network since $\bcnet(d_i) = \bcnet(c_i)$.
	The reason is that sending $n$ only shifts copies within $\bcnet_F$.
	Altogether, we obtain a cycle
	\begin{align*}
		c = \bcnet(c_0) \xrightarrow{w_1} \bcnet(d_1) \xrightarrow{w_2} \bcnet(d_2) \xrightarrow{w_3} \cdots \xrightarrow{w_n} \bcnet(d_n) = c.
	\end{align*}
	
	The obtained cycle in $\bcnet$ is good for $F$.
	Let $j \in [1..k]$ be a client participating in the cycle.
	Then, $j \in \type(c \rightarrow^+_{\bcnet_F} c)$.
	This means that $j$ starts in a state $q \in Q$ and reaches back to $q$.
	By construction of $\bcnet_F$, this is only possible if $j$ visits a state $p$ from $F$ during the cycle.
	Hence, $p$ is also visited during the obtained cycle on $\bcnet$.
	\qed
\end{proof}
% --------------------------------------------------------------------------------
% --------------------------------------------------------------------------------
% --------------------------------------------------------------------------------
% --------------------------------------------------------------------------------

For the proof of Theorem \ref{Theorem:Instrumentation}, it is left to state the algorithm for finding a computation $\tilde c_0\rightarrow^*c \rightarrow^+ c$ in $\bcnet_F$. 
We compute the states reachable from an initial state in $\bcnet_F$. 
As we are interested in a configuration $c$ over $Q$, we intersect this set with $Q$. 
Both steps can be done in polynomial time. 
Let $s_1, \dots, s_m$ be the states in the intersection. 
To these we apply the fixed-point iteration from Lemma~\ref{Lemma:Equation}.
By Lemma \ref{Lemma:Cycle}, the iteration witnesses the existence of a cycle over a configuration $c$ of $\bcnet_F$ that involves only the states $s_1$ up to $s_m$.

	\section{Model Checking Broadcast Networks}
\label{Section:ModelChecking}

We consider model checking for broadcast networks against linear time specifications.
Given a specification, described in linear time temporal logic (LTL)~\cite{Pnueli1977}, we test whether all (infinite) computations of a broadcast network satisfy the specification.
Like for liveness verification, we consider two variants of the problem.
These differ in when a computation of a broadcast network actually satisfies a specification.
\emph{Fair Model Checking} demands that the individual computations of all clients participating infinitely often satisfy the specification.
\emph{Sparse Model Checking} asks for at least one client that participates infinitely often and satisfies the specification.
We show how to solve both problems by reducing them to \emph{Liveness Verification} and \emph{Fair Liveness Verification} via incorporating the Vardi-Wolper construction \cite{Vardi1986}.

Before we consider both mentioned problems, we briefly recall syntax and semantics of linear time temporal logic.
Let $\prop$ be a set of atomic propositions.
An LTL formula $\varphi$ over the set of propositions is defined as follows:
\begin{align*}
	\varphi \; ::= \; p \in \prop  \, | 
	\, \neg \varphi \, | 
	\, \varphi \vee \varphi \, | 
	\, \Next \varphi  \,  | 
	\, \varphi \, \Until \varphi.
\end{align*}

An LTL formula consists of propositions and combinations of the same via negation, union, \emph{next operator} $\Next$, and \emph{until operator} $\Until$.
Semantics is defined in terms of the \emph{satisfaction relation} $\models$.
It describes when a formula is fulfilled.
We define it inductively along the structure of LTL formulas.
Let $w \in \PSet{\prop}^\omega$ be an infinite word consisting of sets of propositions and let $i \in \Naturals$.
We have
\begin{align*}
	w,i &\models p 
	&\text{iff}  \ \
	&p \in w(i), \\
	w,i &\models \neg \varphi 
	&\text{iff} \ \ 
	&w,i \not\models \varphi, \\
	w,i &\models \varphi_1 \vee \varphi_2 
	&\text{iff} \ \ 
	&w,i \models \varphi_1 ~ \text{or} ~ w,i \models \varphi_2, \\	
	w,i &\models \Next \varphi 
	&\text{iff} \ \ 
	&w,i+1 \models \varphi, \\
	w,i &\models \varphi_1 \Until \varphi_2 
	&\text{iff} \ \ 
	&\exists j \geq i : w,j \models \varphi_2 ~ \text{and} ~ \forall k \in [i..j-1]: w,k \models \varphi_1, \\		
	w &\models \varphi 
	&\text{iff} \ \
	&w,0 \models \varphi.
\end{align*}

For model checking broadcast networks against LTL specifications, we need to define when a network computation satisfies an LTL formula.
Since network computations consist of concurrently running client computations, the definition of satisfaction is based upon the latter.

Let $\bcnet = (\Domain,\bcprot)$ be a broadcast network with clients $\bcprot = (Q, I, \delta)$ and let $\varphi$ be an LTL formula.
Moreover, let $\bcmap: Q \rightarrow \PSet{\prop}$ be a map associating to each state of $\bcprot$ a set of atomic proposition.
An infinite computation \mbox{$\sigma = q_0 \rightarrow q_1 \rightarrow \cdots$} of $\bcprot$ with $q_0 \in I$ is said to \emph{satisfy} formula $\varphi$ if $\bcmap(\sigma) \models \varphi$. 
Here, \mbox{$\bcmap(\sigma) = \bcmap(q_0) . \bcmap(q_1) \ldots \in \PSet{\prop}^\omega$} is the infinite word obtained by applying $\bcmap$ to the whole computation.
We also write $\sigma \models \varphi$.

For network computations, we consider two different notions of satisfaction for LTL formulas.
Like for the liveness verification problems, one of the notions incorporates a fairness assumption, the other one focuses on a single client.

Let $\bcnet = (\Domain,\bcprot)$ be a broadcast network with client states $Q$, \mbox{$\bcmap : Q \rightarrow \PSet{\prop}$} a map, and $\varphi$ an LTL formula.
An initialized infinite computation $\pi$ of $\bcnet$ \emph{satisfies} $\varphi$ \emph{under fairness} if each client $i \in \Inf{\pi}$ satisfies $\varphi$ with its contribution, $\Trans_i(\pi) \models \varphi$.
We write $\pi \models_{\fair} \varphi$.
Hence, satisfaction under fairness ensures that each client that participates infinitely often also satisfies the specification.
An initialized infinite computation $\pi$ of $\bcnet$ \emph{sparsely satisfies} $\varphi$ if there is a client $i \in \Inf{\pi}$ that satisfies $\varphi$, $\Trans_i(\pi) \models \varphi$.
We denote it by $\pi \models_{\sparse} \varphi$.
In contrast to fairness, a single client satisfying the specification suffices.

The two notions of satisfaction yield two decision problems: \emph{Fair Model Checking} and \emph{Sparse Model Checking}.
We state both in the subsequent sections and develop algorithms based on the results from Sections \ref{Section:Liveness} and \ref{Section:FairLiveness}.
% --------------------------------------------------------------------------------
% --------------------------------------------------------------------------------
% --------------------------------------------------------------------------------
% --------------------------------------------------------------------------------
\subsubsection*{Fair Model Checking}
% --------------------------------------------------------------------------------
% --------------------------------------------------------------------------------
% --------------------------------------------------------------------------------
% --------------------------------------------------------------------------------
We want to test whether all computations of a broadcast network satisfy a given LTL formula under the fairness assumption.
Formally, a broadcast network $\bcnet$ is said to \emph{satisfy} an LTL formula $\varphi$ \emph{under fairness}, written $\bcnet \models_{\fair} \varphi$, if for each initialized infinite computation $\pi$ we have that $\pi \models_{\fair} \varphi$.
With this notion at hand, we can formalize the corresponding decision problem \emph{Fair Model Checking}.
Note that we do not explicitly mention the map $\bcmap$ as part of the input. We assume it to be given with the broadcast network.
% ----------------------------------------------------------------------------------------------
% ----------------------------------------------------------------------------------------------
% ----------------------------------------------------------------------------------------------
% ----------------------------------------------------------------------------------------------
\begin{myproblem}
	\problemtitle{Fair Model Checking}
	\probleminput{A broadcast network $\bcnet = (\Domain, \bcprot)$ and an LTL formula $\varphi$.}
	\problemquestion{Does $\bcnet \models_{\fair} \varphi$ hold?}
\end{myproblem}
% ----------------------------------------------------------------------------------------------
% ----------------------------------------------------------------------------------------------
% ----------------------------------------------------------------------------------------------
% ----------------------------------------------------------------------------------------------

Our goal is to prove the following theorem by presenting an algorithm for \emph{Fair Model Checking}.
Note that the exponential factor in the time estimation only depends on the size of the formula.
The \emph{size} $\abs{\bcnet} = \max\setcon{\abs{\Domain}, \abs{Q}}$ of the broadcast network only contributes a polynomial factor.
% ----------------------------------------------------------------------------------------------
% ----------------------------------------------------------------------------------------------
% ----------------------------------------------------------------------------------------------
% ----------------------------------------------------------------------------------------------
\begin{theorem}\label{Theorem:FairModelChecking}
	Fair Model Checking can be solved in time $2^{\bigO({\abs{\varphi}})} \cdot \abs{\bcnet}^{\bigO(1)}$.
\end{theorem}
% ----------------------------------------------------------------------------------------------
% ----------------------------------------------------------------------------------------------
% ----------------------------------------------------------------------------------------------
% ----------------------------------------------------------------------------------------------

We need to develop an algorithm for checking $\bcnet \models_{\fair} \varphi$.
A direct iteration over all computations $\pi$ of $\bcnet$ along with a test whether $\pi \models_{\fair} \varphi$ is not tractable since there are infinitely many candidates for $\pi$.
We rather search for a computation that violates $\varphi$.
The non-existence of such a computation ensures that $\bcnet$ satisfies the formula.
Phrased differently, we have $\bcnet \not\models_{\fair} \varphi$ if and only if there is a computation $\pi$ of $\bcnet$ with $\pi \not\models_{\fair} \varphi$.

The latter can be reformulated.
Let $\pi$ be a computation with $\pi \not\models_{\fair} \varphi$.
Then there is a client $i \in \Inf{\pi}$ that violates $\varphi$, we have $\Trans_i(\pi) \models \neg \varphi$.
But by definition this means $\pi \models_{\sparse} \neg \varphi$.
Hence, we obtain that $\pi \not\models_{\fair} \varphi$ if and only if $\pi \models_{\sparse} \neg \varphi$.
The following lemma summarizes the reasoning.
% ----------------------------------------------------------------------------------------------
% ----------------------------------------------------------------------------------------------
% ----------------------------------------------------------------------------------------------
% ----------------------------------------------------------------------------------------------
\begin{lemma}\label{Lemma:FairToSparse}
	We have $\bcnet \not\models_{\fair} \varphi$ if and only if there is an initialized infinite computation $\pi$ of $\bcnet$ such that $\pi \models_{\sparse} \neg \varphi$.
\end{lemma}
% ----------------------------------------------------------------------------------------------
% ----------------------------------------------------------------------------------------------
% ----------------------------------------------------------------------------------------------
% ----------------------------------------------------------------------------------------------

Due to Lemma \ref{Lemma:FairToSparse} it suffices to find a computation of $\bcnet$ that sparsely satisfies the negation of $\varphi$.
We develop an algorithm for this task along the following two steps.
(1) We employ the Vardi-Wolper construction \cite{Vardi1986} to build a new broadcast network $\bcnet_{\neg \varphi}$ such that all computations of $\bcnet_{\neg \varphi}$ visiting final states infinitely often are computations of $\bcnet$ that sparsely satisfy $\neg \varphi$.
(2) We decide the existence of a former computation with the fixed-point iteration for \emph{Liveness Verification}, developed in Section \ref{Section:Liveness}.

Step (1) relies on the following well-known characterization of LTL formulas in terms of automata.
The result is crucial for our development since it allows for representing all words that satisfy a given formula as a language of a B\"uchi automaton.
We assume familiarity with the notion of B\"uchi automata~\cite{Nerode2001}.
% ----------------------------------------------------------------------------------------------
% ----------------------------------------------------------------------------------------------
% ----------------------------------------------------------------------------------------------
% ----------------------------------------------------------------------------------------------
\begin{theorem}\label{Theorem:LTLBuchi}
	\cite{Vardi1986}
	For any LTL formula $\psi$ over $\prop$, one can construct a B\"uchi automaton $B_\psi$ of size at most $2^{\bigO(\abs{\psi})}$ such that for any word $w \in \PSet{\prop}^\omega$
	\begin{align*}
		w \models \psi \ \text{if and only if} \ w \in L(B_\psi).
	\end{align*}
\end{theorem}
% ----------------------------------------------------------------------------------------------
% ----------------------------------------------------------------------------------------------
% ----------------------------------------------------------------------------------------------
% ----------------------------------------------------------------------------------------------

Apply Theorem \ref{Theorem:LTLBuchi} to the formula $\neg \varphi$ of interest.
We obtain an automaton $B_{\neg \varphi} = (Q_B, I_B, \delta_B, F_B)$ over the alphabet $\PSet{\prop}$ with states $Q_B$, initial states $I_B$, transitions $\delta_B$, and final states $F_B$.
Further, we assume $B_{\neg \varphi}$ to be non-blocking: in each state we have an outgoing transition on each letter.
The language of $B_{\neg \varphi}$ consists of exactly those words in $\PSet{\prop}^\omega$ that satisfy $\neg \varphi$.

In order to obtain computations of $\bcnet$ that satisfy $\neg \varphi$, we have to build a cross product of $P$, the clients in $\bcnet$, with $B_{\neg \varphi}$.
Intuitively, a computation in the cross product is then a computation of a client which at the same time satisfies the formula $\neg \varphi$.
The construction however needs to take into account that $B_{\neg \varphi}$ is an automaton over a different alphabet than $\Ops{\Domain}$.

Let $P = (Q, I, \delta)$ be the clients of $\bcnet$ and $\bcmap: Q \rightarrow \PSet{\prop}$ the given map.
We define the new client $P_{\neg \varphi} = (Q_{\neg \varphi}, I_{\neg \varphi}, \delta_{\neg \varphi})$ over the alphabet $\Ops{\Domain}$.
The states are given by $Q_{\neg \varphi} = Q \times Q_B$, the initial states by $I_{\neg \varphi} = I \times I_B$.
Transitions in $\delta_{\neg \varphi}$ are defined as follows:
\begin{align*}
	(q,p) \xrightarrow{\op} (q',p') \in \delta_{\neg \varphi}
	\ \ \text{if} \ \
	q \xrightarrow{\op} q' \in \delta
	 \ \ \text{and} \ \
	p \xrightarrow{\bcmap(q)} p' \in \delta_B.
\end{align*}
We also define a set of final states by $F_{\neg \varphi} = Q \times F_B$.
Then, a computation of $P_{\neg \varphi}$ that visits $F$ infinitely often is a computation of $P$ that satisfies $\neg \varphi$.

The broadcast network of interest is $\bcnet_{\neg \varphi} = (\Domain,P_{\neg \varphi})$.
It uses $P_{\neg \varphi}$ as clients and yields the desired result:
a computation of $\bcnet_{\neg \varphi}$ in which a client visits $F_{\neg \varphi}$ infinitely often is a computation of $\bcnet$ that sparsely satisfies $\neg \varphi$.
% ----------------------------------------------------------------------------------------------
% ----------------------------------------------------------------------------------------------
% ----------------------------------------------------------------------------------------------
% ----------------------------------------------------------------------------------------------
\begin{lemma}\label{Lemma:FairModelCheckingCorrectness}
	There is a computation $\pi$ of $\bcnet$ such that $\pi \models_{\sparse} \neg \varphi$ if and only if there is a computation $\pi'$ of $\bcnet_{\neg \varphi}$ with $\Fin{\pi'} \neq \emptyset$.
\end{lemma}
% ----------------------------------------------------------------------------------------------
% ----------------------------------------------------------------------------------------------
% ----------------------------------------------------------------------------------------------
% ----------------------------------------------------------------------------------------------
Note that the lemma reasons about initialized infinite computations and the set $\Fin{\pi'}$ is defined via the final states $F_{\neg \varphi}$.
%Before we prove the lemma, we finish the explanation of the algorithm.

Putting Lemmas \ref{Lemma:FairToSparse} and \ref{Lemma:FairModelCheckingCorrectness} together yields a reduction to \emph{Liveness Verification}.
It is left to test whether there is a computation $\pi'$ of $\bcnet_{\neg \varphi}$ in which a client visits $F_{\neg \varphi}$ infinitely often.
Step (2) decides the existence of such a computation by applying the fixed-point iteration from Section \ref{Section:Liveness} to $(\bcnet_{\neg \varphi}, F_{\neg\varphi})$.

Regarding the complexity stated in Theorem \ref{Theorem:FairModelChecking}, consider the following.
By Theorem \ref{Theorem:LTLBuchi}, the automaton $B_{\neg \varphi}$ has at most $2^{\bigO(\abs{\varphi})}$ many states and can be constructed in time $2^{\bigO(\abs{\varphi})}$.
Hence, the clients $P_{\neg \varphi}$ have at most $\abs{Q} \cdot 2^{\bigO(\abs{\varphi})}$ many states.
The size of the alphabet is $\bigO(\abs{\Domain})$.
Constructing the new clients $P_{\neg \varphi}$ can thus be achieved in time polynomial in $2^{\bigO(\abs{\varphi})} \cdot \abs{\bcnet}$.
The size of $\bcnet_{\neg \varphi}$ is at most $2^{\bigO(\abs{\varphi})} \cdot \abs{\bcnet}$.
Since the fixed-point iteration for solving \emph{Liveness verification} runs in polynomial time, we obtain the desired complexity estimation.
It is left to prove Lemma \ref{Lemma:FairModelCheckingCorrectness}. 
% --------------------------------------------------------------------------------
% --------------------------------------------------------------------------------
% --------------------------------------------------------------------------------
% --------------------------------------------------------------------------------
\begin{proof}
	First, let $\pi$ be an initialized infinite computation of the broadcast network $\bcnet$ such that \mbox{$\pi \models_{\sparse} \neg \varphi$.}
	Assume that $\pi$ involves $k \in \Naturals$ clients.
	By definition there is a client $i \in \Inf{\pi}$ with $\Trans_i(\pi) \models \neg \varphi$.
	This means that the word $\bcmap(\Trans_i(\pi)) \in \PSet{\prop}^\omega$ satisfies $\neg \varphi$ and thus lies in the language $L(B_{\neg \varphi})$.
	Hence, there is an initialized computation $r_i$ of $B_{\neg \varphi}$ on the word that accepts it.
	Let $b_i \in Q_B^\omega$ be the sequence of states appearing in $r_i$.
	
	Since $B_{\neg \varphi}$ is non-blocking, there is also in initialized computation $r_j$ on the word $\bcmap(\Trans_j(\pi)) \in \PSet{\prop}^\omega$, for each client $j \neq i$.
	Note that $r_j$ is not necessarily accepting.
	We extract the sequence of states $b_j \in Q_B^\omega$ from each $r_j$.
	Denote by $b$ the vector of all sequences 
	\begin{align*}
		b = (b_1, \dots, b_k).
	\end{align*}
	
	We show how to combine the computation $\pi$ and the vector $b$ to obtain a computation $\pi'$ of $\bcnet_{\neg \varphi}$ that satisfies $\Fin{\pi'} \neq \emptyset$.
	To this end, we will use the following notations.
	Let $\ell \in \Naturals$, then the $\ell$-th configuration of $\pi$ is denoted by $\pi(\ell) \in Q^k$.
	For $\ell' \geq \ell$, the notation $\pi(\ell, \ell')$ refers to the subcomputation of $\pi$ starting in $\pi(\ell)$ and ending in $\pi(\ell')$.
	Moreover, we use $b_j(\ell)$ to access the $\ell$-th state in  the sequence $b_j$.
		
	We define $\pi'$ inductively.
	Roughly, each client of $\pi$ gets joined with the corresponding sequence of states in $b$.
	This yields a sequence of configurations of $\bcnet_{\neg \varphi}$.
	Assume we have already constructed the subcomputation $\pi'(0,\ell)$.
	Throughout the induction, we maintain two invariants:
	(1) $\pi'(0,\ell)$ is in fact an initialized computation of $\bcnet_{\neg \varphi}$.
	(2) When we project a configuration visited by $\pi'(0,\ell)$ to $Q$, the client states of $\bcnet$, we obtain the corresponding configuration of~$\pi$.
	Formally, we have $\proj_Q (\pi'(\ell')) = \pi(\ell')$, where the map
	\begin{align*}
		\proj_Q : (Q \times Q_B)^k \rightarrow Q^k
	\end{align*}
	denotes the componentwise projection onto $Q$ and $\ell' \leq \ell$.
	
	We will use a map $\idx_b : [1..k] \rightarrow \Naturals$ to store, for each client $j \in [1..k]$, the latest position up to which $b_j$ has already processed.
	
	As induction basis, we define the initial configuration of $\pi'$:
	\begin{align*}
		\pi'(0) = \pi(0) \times (b_1(0), \dots, b_k(0)).
	\end{align*}
	Here, $\times$ denotes the componentwise product.
	Since $I_{\neg \varphi} = I \times I_B$, Invariant~(1) is satisfied.
	Invariant (2) holds true by construction.
	Moreover, for each client $j \in [1..k]$ we set $\idx_b(j) = 1$.
	
	For the induction step, let $\pi'(0,\ell)$ already be constructed.
	We show how to construct the configuration $\pi'(\ell+1)$ such that $\pi'(0,\ell+1)$ is a proper computation satisfying the invariants (1) and (2).
	Let $\pi'(\ell) = ((q_1,p_1), \dots, (q_k,p_k))$.
	By Invariant (2) we have $\pi(\ell) = (q_1, \dots, q_k)$.
	Moreover, let $\pi(\ell+1) = (q'_1, \dots, q'_k)$.
	
	Since $\pi$ is a computation of $\bcnet$, there is a transition $\tau = \pi(\ell) \xrightarrow{a} \pi(\ell+1)$ on an $a \in \Domain$.
	Let $\J$ denote the set of clients that contribute to $\tau$, \mbox{$\J = \type(\tau)$.}
	We extend the computation $\pi'(0,\ell)$ by a transition $\tau'$ that leads to $\pi'(\ell+1)$.
	
	The configuration $\pi'(\ell+1)$ is defined componentwise as follows:
	\begin{align*}
		\pi'(\ell+1)[j] = \left\lbrace
		\begin{aligned}
			(q'_j, p_j)&, \ \text{if} \ j \notin \J, \\
			(q'_j, p'_j)&, \ \text{otherwise},
		\end{aligned}
		\right.
	\end{align*}
	where $p'_j = b_j(\idx_b(j))$.
	Further we increase $\idx_b(j)$ by $1$ for all $j \in \J$.
	We define the transition $\tau'$ for each client $j \in [1..k]$.
	Consider $\Trans_j(\tau) = q_j \xrightarrow{\op} q'_j$.
	Then we set the contribution of client $j$ to be $(q_j,p_j) \xrightarrow{\op} (q'_j,p'_j)$ with $p'_j$ as defined before.
	The resulting transition is $\tau'$.
	
	Invariant (1) still holds since $\pi'(0,\ell)$ is an initialized computation of $\bcnet_{\neg \varphi}$ and $\tau'$ is a proper transition.
	Note that $\Trans_j(\tau') = (q_j,p_j) \xrightarrow{\op} (q'_j,p'_j)$ is a proper transition in $P_{\neg \varphi}$.
	Hence, $\tau' = \pi'(\ell) \xrightarrow{a} \pi'(\ell+1)$ is a transition in $\bcnet_{\neg \varphi}$.
	Invariant (2) holds by construction of $\pi'(\ell+1)$.
	
	This finishes the construction of $\pi'(0,\ell)$.
	Via induction, we obtain the whole infinite computation $\pi'$ that is a proper computation of $\bcnet_{\neg \varphi}$.
	Since $b_i$ is the sequence of states of $r_i$, an accepting run of $B_{\neg \varphi}$, we get that $\pi'$ visits the set of states $F_{\neg \varphi} = Q \times F_B$ infinitely often.
	Phrased differently, we obtain $\Fin{\pi'} \neq \emptyset$ which finishes the first direction of the proof.
	
	For the other direction, let $\pi'$ be a computation of $\bcnet_{\neg \varphi}$ with \mbox{$\Fin{\pi'} \neq \emptyset$.}
	We assume that $\pi'$ involves $k$ clients.
	Let $i \in \Fin{\pi'}$ be a client of the computation visiting the final states infinitely often.
	
	Consider the computation $\pi = \proj_Q(\pi')$.
	By definition of $\bcnet_{\neg \varphi}$, $\pi$ is an initialized infinite computation of $\bcnet$.
	Since $i \in \Fin{\pi'}$, the word $\bcmap(\Trans_i(\pi))$ lies in the language $L(B_{\neg \varphi})$.
	This means that $\Trans_i(\pi) \models \neg \varphi$.
	Thus, we have a client $i \in \Inf{\pi}$ that satisfies $\neg \varphi$.
	By definition, $\pi \models_{\sparse} \neg \varphi$.
	\qed
\end{proof}
% --------------------------------------------------------------------------------
% --------------------------------------------------------------------------------
% --------------------------------------------------------------------------------
% --------------------------------------------------------------------------------
\subsubsection*{Sparse Model Checking}
% --------------------------------------------------------------------------------
% --------------------------------------------------------------------------------
% --------------------------------------------------------------------------------
% --------------------------------------------------------------------------------
The second model checking problem that we consider is \emph{Sparse Model Checking}.
It demands that all computations of a broadcast network sparsely satisfy an LTL specification.
We design an algorithm similar to \emph{Fair Model Checking} that invokes the Vardi-Wolper construction to establish a reduction to \emph{Fair Liveness Verification}.
The latter can then be solved \mbox{with the algorithm from Section \ref{Section:FairLiveness}.}

We state the decision problem.
To this end, a broadcast network $\bcnet$ is said to \emph{sparsely satisfy} an LTL formula $\varphi$ if for each initialized infinite computation $\pi$ we have that $\pi \models_{\sparse} \varphi$.
In this case, we write $\bcnet \models_{\sparse} \varphi$.
% ----------------------------------------------------------------------------------------------
% ----------------------------------------------------------------------------------------------
% ----------------------------------------------------------------------------------------------
% ----------------------------------------------------------------------------------------------
\begin{myproblem}
	\problemtitle{Sparse Model Checking}
	\probleminput{A broadcast network $\bcnet = (\Domain,\bcprot)$ and an LTL formula $\varphi$.}
	\problemquestion{Does $\bcnet \models_{\sparse} \varphi$ hold?}
\end{myproblem}
% ----------------------------------------------------------------------------------------------
% ----------------------------------------------------------------------------------------------
% ----------------------------------------------------------------------------------------------
% ----------------------------------------------------------------------------------------------

The following theorem states the main result.
Note that the size of the broadcast network only contributes a polynomial factor.
% ----------------------------------------------------------------------------------------------
% ----------------------------------------------------------------------------------------------
% ----------------------------------------------------------------------------------------------
% ----------------------------------------------------------------------------------------------
\begin{theorem}\label{Theorem:SparseModelChecking}
	Sparse Model Checking can be solved in time $2^{\bigO({\abs{\varphi}})} \cdot \abs{\bcnet}^{\bigO(1)}$.
\end{theorem}
% ----------------------------------------------------------------------------------------------
% ----------------------------------------------------------------------------------------------
% ----------------------------------------------------------------------------------------------
% ----------------------------------------------------------------------------------------------

Like for \emph{Fair Liveness Verification} it is simpler to find a computation violating the given specification than checking whether all computations satisfy it.
Hence, we give an algorithm deciding $\bcnet \not\models_{\sparse} \varphi$.
If a computation $\pi$ of $\bcnet$ does not sparsely saitsfy $\varphi$, all clients in $\Inf{\pi}$ will satisfy $\neg \varphi$.
But this means that $\pi \models_{\fair} \neg \varphi$.
We obtain the following lemma.
% ----------------------------------------------------------------------------------------------
% ----------------------------------------------------------------------------------------------
% ----------------------------------------------------------------------------------------------
% ----------------------------------------------------------------------------------------------
\begin{lemma}\label{Lemma:SparseToFair}
	We have $\bcnet \not\models_{\sparse} \varphi$ if and only if there is an initialized infinite computation $\pi$ of $\bcnet$ such that $\pi \models_{\fair} \neg \varphi$.
\end{lemma}
% ----------------------------------------------------------------------------------------------
% ----------------------------------------------------------------------------------------------
% ----------------------------------------------------------------------------------------------
% ----------------------------------------------------------------------------------------------

As above, we construct a broadcast network $\bcnet_{\neg \varphi}$ with a set of final states~$F_{\neg \varphi}$.
We invoke Theorem \ref{Theorem:LTLBuchi} on $\neg \varphi$ and obtain an automaton $B_{\neg \varphi}$.
Now we construct the clients $P_{\neg \varphi}$ as before, as a cross product of $P$ and $B_{\neg \varphi}$.
The resulting broadcast network is $\bcnet_{\neg \varphi} = (\Domain, \bcprot_{\neg \varphi})$.
The final states are $F_{\neg \varphi} = Q \times F_B$.
Computations of $\bcnet_{\neg \varphi}$ in which each client that moves infinitely often also visits $F_{\neg \varphi}$ infinitely often are precisely those computations of $\bcnet$ that satisfy $\neg \varphi$ under fairness.
The following lemma summarizes the statement.
%The proof is similar to Lemma \ref{Lemma:FairModelCheckingCorrectness}.
% ----------------------------------------------------------------------------------------------
% ----------------------------------------------------------------------------------------------
% ----------------------------------------------------------------------------------------------
% ----------------------------------------------------------------------------------------------
\begin{lemma}\label{Lemma:SparseModelCheckingCorrectness}
	There is a computation $\pi$ of $\bcnet$ such that $\pi \models_{\fair} \neg \varphi$ if and only if there is a computation $\pi'$ of $\bcnet_{\neg \varphi}$ with $\Inf{\pi'} \subseteq \Fin{\pi'}$.
\end{lemma}
% ----------------------------------------------------------------------------------------------
% ----------------------------------------------------------------------------------------------
% ----------------------------------------------------------------------------------------------
% ----------------------------------------------------------------------------------------------

From Lemmas \ref{Lemma:SparseToFair} and \ref{Lemma:SparseModelCheckingCorrectness}, we obtain a reduction to \emph{Fair Liveness Verification}.
Hence, we can apply the algorithm from Section \ref{Section:FairLiveness} to decide the existence of $\pi'$.
Since the algorithm takes only polynomial time, the complexity estimation given in Theorem \ref{Theorem:SparseModelChecking} follows similar to the estimation of Theorem \ref{Theorem:FairModelChecking}.

	\bibliographystyle{plain}
	\bibliography{content/cite}
\end{document}